\newtheorem{proposition}{Proposition} 
\renewcommand{\arraystretch}{1.5}
\begin{document}
    
\title{
Design and Implementation of a High-Precision Wind-Estimation UAV with Onboard Sensors
}

\author{Haowen Yu$^{1}$, Na Fan$^{2}$, Xing Liu$^{3}$, Ximin Lyu$^{1, \dagger}$

\thanks{$^{1}$ School of Intelligent Systems Engineering, Sun Yat-sen University, Guangdong, China}
\thanks{$^{2}$ Department of Computer Science and Engineering, Hong Kong University of Science and Technology, Hong Kong, China}
\thanks{$^{3}$ Shenzhen ZEEY Technology Co., Ltd., Guangdong, China}

\thanks{$^\dagger$ denotes the corresponding authors: Ximin Lyu ({\tt\footnotesize lvxm6@mail.sysu.edu.cn}).}
  }

\maketitle

\begin{abstract}
Accurate real-time wind vector estimation is essential for enhancing the safety, navigation accuracy, and energy efficiency of unmanned aerial vehicles (UAVs). 
Traditional approaches rely on external sensors or simplify vehicle dynamics, which limits their applicability during agile flight or in resource-constrained platforms. 
This paper proposes a real-time wind estimation method based solely on onboard sensors.
The approach first estimates external aerodynamic forces using a disturbance observer (DOB), and then maps these forces to wind vectors using a thin-plate spline (TPS) model.
A custom-designed wind barrel mounted on the UAV enhances aerodynamic sensitivity, further improving estimation accuracy. 
The system is validated through comprehensive experiments in wind tunnels, indoor and outdoor flights. 
Experimental results demonstrate that the proposed method achieves consistently high-accuracy wind estimation across controlled and real-world conditions, with speed RMSEs as low as \SI{0.06}{m/s} in wind tunnel tests, \SI{0.22}{m/s} during outdoor hover, and below \SI{0.38}{m/s} in indoor and outdoor dynamic flights, and direction RMSEs under \ang{7.3} across all scenarios, outperforming existing baselines. Moreover, the method provides vertical wind estimates---unavailable in baselines---with RMSEs below \SI{0.17}{m/s} even during fast indoor translations.
\end{abstract}

\begin{IEEEkeywords}
Wind Estimaion, UAV, Disturbance Observer
\end{IEEEkeywords}

\IEEEpeerreviewmaketitle

\section{INTRODUCTION}
Accurate real-time wind vector measurements, including wind speed and direction, are essential for unmanned aerial vehicle (UAV) operations.
These measurements offer three key benefits. 
First, they improve flight safety and stability by compensating for wind disturbances, especially in strong winds~\cite{2020CEP_Multiple_Guo}. 
Second, they enhance navigation precision by correcting wind-induced drift, thus supporting accurate trajectory tracking~\cite{2022SR_NeuralFly_OConnell}.
Third, they optimize energy use and flight endurance via wind-aware path planning and power management, maximizing operational range and flight time~\cite{2024IROS_EnergyPlanning_Duan}.
However, existing wind estimation methods often struggle to deliver high accuracy across a wide measurement range without compromising flight safety or onboard computational efficiency, especially in dynamic flight.

Traditional high-precision wind measurement methods rely on expensive or heavy external devices, e.g., ground anemometers, weather balloons, and weather radars.
Ground anemometers provide only point measurements~\cite{2020AAcous_Ultrasonic_Jia}.
Weather balloons are costly and subject to uncertainties due to spatial drift~\cite{2020AMS_WeatherBallon_Evert}. 
Weather radar lacks sufficient resolution for fine-scale wind measurements near operational aircraft~\cite{2014IMM_BriefOverview_Yeary}. 
Mounting wind sensors directly on UAVs can provide localized wind measurements, but typically increases power consumption and enlarges the UAV's geometry, which may negatively affect obstacle avoidance capability.
To address these challenges, researchers have explored methods that use only onboard sensors to estimate wind~\cite{2017JAOT_WindEstimation_Tilt_Palomaki,2022Atmosphere_Wind_Tilt_Meier}.
These methods typically infer wind from attitude changes and are based on quasi-static assumptions, neglecting full UAV dynamics. 
As a result, their accuracy degrades during dynamic flight maneuvers.

\begin{figure}[t]
\centering
\includegraphics[width=0.8\columnwidth]{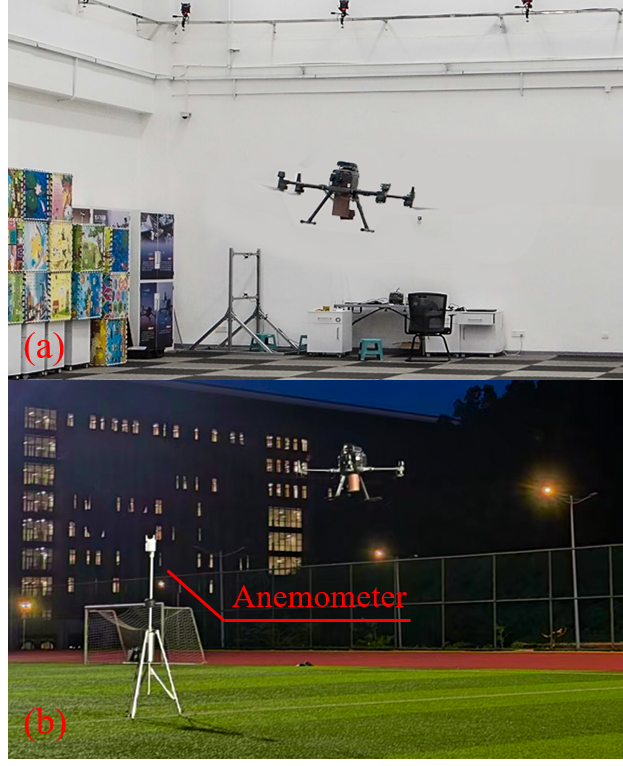}
\caption{ 
Wind estimation experiments in diverse environments:
(a) Indoor dynamic flight in still-air conditions;
(b) Outdoor hover and dynamic flight in natural, time-varying winds.
}
\label{fig:first_page_show}
\vspace{-0.49cm}
\end{figure}

To address these limitations, we propose a method that leverages only onboard sensors to estimate wind vectors, enabling accurate measurements during both static and dynamic flight.
This approach eliminates the need for additional dedicated measurement instruments, thereby reducing dead weight and complexity while enhancing safety and endurance. 

Our proposed method adopts a two-stage structure.
The front-end employs a disturbance observer (DOB)~\cite{2000TIE_NDOB_Chen,2018RAL_DOBHinfinity_Lyu} to estimate external forces acting on the vehicle. 
The DOB provides high-frequency force estimates without relying on quasi-static assumptions, enabling wind estimation during dynamic flight.
The back-end maps these estimated forces to wind vectors using a hybrid model pre-fitted on wind tunnel data: a thin-plate spline (TPS)~\cite{2003JRSS_TPS_Wood} for horizontal and a regression model for vertical components.
The TPS model’s minimum bending energy property ensures smooth and accurate force-to-wind mapping over a wide range.
To further enhance wind sensitivity and estimation accuracy, we incorporated a custom-designed wind barrel.

We evaluated the proposed wind estimation system across a variety of scenarios, including controlled wind tunnel tests, indoor flights, and outdoor hover and dynamic flight tests (see Fig.~\ref{fig:first_page_show}). 
Experimental results demonstrate that the proposed method consistently achieves high-accuracy wind estimation and outperforms baselines across all scenarios.
In wind tunnel tests, speed and direction RMSEs reached \SI{0.06}{m/s} and \ang{3.6} at a ground-truth wind speed of \SI{10}{m/s}.
During outdoor hover, RMSEs were \SI{0.22}{m/s} and \ang{3.3}, strongly correlated with the ground truth (correlation coefficient $r>0.9$).
In indoor dynamic flights, horizontal wind speed RMSE remained below \SI{0.38}{m/s} and direction RMSE below \ang{7.3}.
For the vertical component, where the ground-truth wind speed was \SI{0}{m/s}, the method achieved RMSEs under \SI{0.17}{m/s}.
Even in outdoor dynamic flights with varying natural wind, the method achieved RMSEs of \SI{0.29}{m/s} for speed and \ang{6.0} for direction in circular flights, consistently surpassing the baselines during unconstrained trajectories.

Our contributions are summarized as follows:
\begin{enumerate}
\item 
\textbf{DOB-based wind vector estimation algorithm:} 
We propose a real-time wind vector estimation method based on DOB, which relies solely on onboard sensors. 
The approach achieves high-precision wind speed and direction estimation without external sensors, and remains effective even during dynamic UAV maneuvers.

\item 
\textbf{Improved estimation accuracy and extended measurement range:} 
The DOB enables accurate external force estimation over a wide range.
The TPS model provides smooth and accurate mapping from force to wind.
The custom-designed wind barrel increases aerodynamic sensitivity, further enhancing estimation performance.

\item 
\textbf{Comprehensive experimental validation:} 
Extensive experiments, including wind tunnel tests, outdoor hover trials, and both indoor and outdoor dynamic flights, validate the method’s reliability and performance in static and dynamic scenarios (see Sec.~\ref{sec:Experiment}).
\end{enumerate}

The remainder of this paper is organized as follows: 
Sec.~\ref{sec:related_work} provides an overview of related work.
Sec.~\ref{sec:Preliminaries} introduces the defined coordinate frames and the UAV dynamics. 
Sec.~\ref{sec:Wind_Estimation_Methodology} presents the principle and framework of the proposed wind estimation system, including the DOB formulation (Sec.~\ref{sec:dob_design}), the wind barrel design and validation (Sec.~\ref{sec:wind_barrel_design}), and the force-to-wind mapping model (Sec.~\ref{sec:Force_wind_relationship}).
Experimental results are presented in Sec.~\ref{sec:Experiment}. 
Finally, Sec.~\ref{sec:conclusion} concludes the paper and introduces future work.

\section{RELATED WORK}  
\label{sec:related_work}
Pioneering UAV wind measurements used \textbf{fixed-wing aircraft} with Pitot tubes to infer airspeed
~\cite{2005ICRA_EarlyFixedWing_Kumon, 2020TAES_MovingHorizon_Wenz}.
However, Pitot tubes require prior knowledge of airflow direction and precise alignment~\cite{2015ICUAS_OnEstimation_Johansen}, and their accuracy degrades at low airspeeds~\cite{2015JP_PitotTubeCalibration_Buscarini}.
Furthermore, fixed-wing UAVs cannot hover, which restricts in-situ wind sensing at a fixed location. 
\textbf{Multirotors} have attracted increasing interest for wind sensing due to their hovering and maneuvering capabilities.
Some studies have integrated conventional wind sensors, such as ultrasonic anemometers~\cite{2018ICUAS_UltrasonicAnemometers_Hollenbeck} and multi-hole pressure probes~\cite{2018JWEIA_MHPP_Prudden}.
However, these sensors are often bulky and heavy, reducing UAV maneuverability, stability, and flight duration~\cite{2023BLM_WindEstUsingAnemometer_Li}.
For instance, mounting a commercial ultrasonic anemometer such as the Sniffer4D Mini 2~\cite{soarability2025sniffer4d} on a DJI M300 adds 780\,g of mass and requires 12\,W of continuous power. To avoid rotor wash, it must be mounted on a mast of approximately 0.37\,m above the rotor plane, significantly increasing the vehicle’s vertical extent.
Others have developed specialized sensors (whiskers~\cite{2020IROS_Whisker_Kim,2021ICRA_Whisker_Andrea}, microphones~\cite{2023Sensors_Microphones_Makaveev}, hot-wire sensors~\cite{2023ICRA_FlowDrone_HotWireFlowSensors_Simon}, surface wind sensors~\cite{2024Measurement_SurfaceSensor_Wang}).

\textbf{Model-based approaches} estimate wind indirectly from UAV motion states (e.g., attitude, velocity) and aerodynamic interactions~\cite{2017JAOT_WindEstimation_Tilt_Palomaki, 2015SAAP_RealTimeWindEstimation_TiltEarliest_Neumann, 2017AIAAAFM_Measuring_GonzalezRocha}.
However, they are typically limited to steady or near-hover flight and cannot estimate vertical wind components. 
Extensions for 3D wind estimation in constant-velocity flight exist~\cite{2022Atmosphere_Wind_Tilt_Meier,2019AIAASciTech_Model_UpDynamics_GonzalezRocha}, but they still struggle during rapid maneuvers and with uncertain aerodynamics.
Kalman Filter (KF)-based estimators are widely used to improve robustness by explicitly modeling process and measurement uncertainties~\cite{2019CCTA_KalmanFilter_Xing, 2020MSR_TwoStageKalman_Hajiyev, 2020IROS_TouchTheWind_ukf_NNlstm_Tagliabue}.
However, their accuracy strongly depends on the fidelity of the UAV dynamics and wind interaction models, and consequently, their performance is often degraded by real-world model imperfections~\cite{2020IJC_Using_Perozzi}.
\textbf{Learning-based methods}~\cite{2023ICRA_FlowDrone_HotWireFlowSensors_Simon, 2022Measurement_WindML1_Zimmerman, 2019AIAASciTech_Estimating_Allison, 2020JBSMSE_Hybrid_Marton} offer data-driven alternatives by mapping UAV sensor data (e.g., IMU, GPS) directly to wind vectors.
They leverage the nonlinear approximation capabilities of neural networks to capture complex relationships, 
but often require large-scale training data and exhibit poor generalization across sensors and platforms.

\textbf{Disturbance Observer (DOB)-based methods} have recently emerged as promising alternatives. 
They estimate external disturbance forces on the UAV and map them to wind vectors using aerodynamic models calibrated from flight data.
Lyu~\textit{et al.}~\cite{2018RAL_DOBHinfinity_Lyu} demonstrated that DOB-based control improves wind disturbance rejection in VTOL UAVs.
Building on this, Yu~\textit{et al.}~\cite{2024IROS_DOB_Yu} proposed a DOB-based wind estimation method using onboard sensors (IMU, GPS) to achieve real-time 3D wind vector estimation.

Despite these advances, existing approaches remain limited in terms of accuracy, capability for 3D wind estimation, robustness under dynamic flight conditions, data efficiency, and adaptability across sensor suites and platforms.
Our approach addresses these gaps by enabling accurate and reliable 3D wind estimation across diverse flight conditions, even during 
Beyond estimation, robust wind-aware control is critical for safe UAV operation. Recent advances in safety-guaranteed and fault-tolerant control—e.g., Yu et al.~\cite{yu2024fault} on fault-tolerant cooperative control and Yu et al.~\cite{yu2025review} on performance-guaranteed safety control—highlight the growing emphasis on resilience under environmental disturbances (including wind). While our work focuses on estimation, its high-fidelity wind output can serve as a key enabler for such control frameworks, closing the loop from perception to robust action.

\section{PRELIMINARIES}
\label{sec:Preliminaries}

\begin{figure}[t]
\vspace{0.0cm}
\centering
\includegraphics[width=1.0\columnwidth]{ 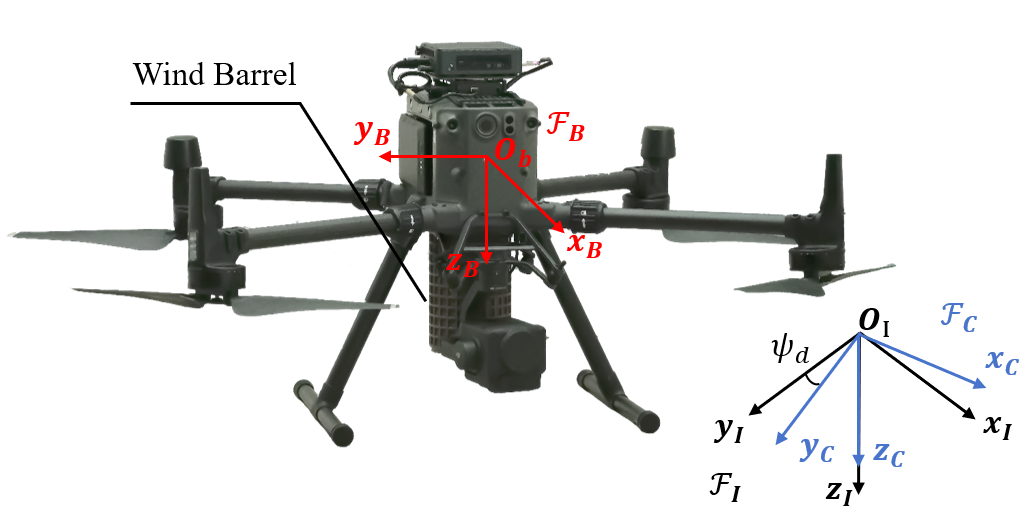}
\caption{
UAV platform with a custom wind barrel mounted beneath to enhance wind estimation.
Three coordinate frames are defined: inertial ($\mathcal{F}_I$), body ($\mathcal{F}_B$), and intermediate ($\mathcal{F}_C$).
}
\label{fig:m300_combine_coord}
\vspace{-0.0cm}
\end{figure}

\subsection{Coordinate Systems}
\label{subsec:coord}
To describe the UAV's motion and wind estimation process, three coordinate frames are introduced (see Fig.~\ref{fig:m300_combine_coord}):  
the Earth-fixed inertial frame $\mathcal{F}_I$ with axes $(\bm{x}_I, \bm{y}_I, \bm{z}_I)$;  
the body frame $\mathcal{F}_B$ with origin $\bm{o}_b$ at the UAV's center of gravity and axes $(\bm{x}_B, \bm{y}_B, \bm{z}_B)$;
and the intermediate frame $\mathcal{F}_C$ with axes $(\bm{x}_C, \bm{y}_C, \bm{z}_C)$, obtained by rotating $\mathcal{F}_I$ about its $\bm{z}_I$-axis by the desired yaw angle $\psi_d$.

\subsection{UAV Dynamics}
\label{subsec:dynamics}
The UAV's dynamics are described by~\cite{2018ARC_Review_Emran}:
\begin{equation}
\label{eq:dynamic1}
\begin{aligned}
    m\bm{\ddot{p}}&=-u_f\bm{R}\bm{e}_3+m{g}_0\bm{e}_3+\bm{f}_e,\\
    \bm{J}\bm{\ddot{\eta}}&=-\bm{\mathbb{C}(\eta,{\dot{\eta}}){\dot{\eta}}}+ {\bm{u}_{\tau}}+\bm{{\tau}}_e.
\end{aligned}
\end{equation}
where 
$m$ is the UAV's mass; 
$g_0$ is the gravitational acceleration;
$\bm{p} = [p_x, p_y, p_z]^\top$ is the position of $\bm{o}_b$ represented in $\mathcal{F}_I$;
$\bm{e}_3 = [0, 0, 1]^\top$ is the unit vector along $\bm{z}_I$;
$\bm{R} \in SO(3)$ is the rotation matrix that transforms vectors from $\mathcal{F}_B$ to $\mathcal{F}_I$; 
$\bm{f}_e = [f_{ex}, f_{ey}, f_{ez}]^\top$ is the total external force in $\mathcal{F}_I$;
$\bm{\eta} = [\phi, \theta, \psi]^\top$ represents the UAV's attitude (e.g., roll, pitch, and yaw angles);
$\bm{J} \in \mathbb{R}^{3 \times 3}$ is the UAV's moment of inertia in $\mathcal{F}_B$;
$\bm{\tau}_e \in \mathbb{R}^3$ is the total external torque in $\mathcal{F}_B$; 
$\mathbb{C}(\bm{\eta},\dot{\bm{\eta}})$ is the Coriolis matrix;  
${{u}_{f}} \in \mathbb{R}$ and $\bm{{{u}_{\bm{\tau}}}}\in \mathbb{R}^3$ are the total thrust and torque control inputs generated by the four rotors~\cite{2017TRO_External_motormodel3_Tomic}.

Fig.~\ref{fig:speed_thrust_curve} illustrates the relationship between the thrust and normalized RPM of each rotor.
The relationship between ${{u}_{f}}$ and normalized RPM can be approximated using a quadratic polynomial~\cite{2014AIAAAAC_ReynoldsNumberEffects_Deters}. 
For our UAV platform, the fitted mapping is:
\begin{equation}
\label{eq:thrust_curve}
{u_f} = \sum_{i=1}^4 (207\Omega_i^2+11.34\Omega_i+0.01315),
\end{equation}
where $\Omega_i$ is the normalized RPM of the $i$-th rotor (actual RPM divided by the maximum RPM). 
Note that while the full control input of the quadrotor includes both thrust $u_f$ and $\bm{u}_\tau$, only $u_f$ is required for the subsequent wind estimation process (see Sec.~\ref{sec:dob_design}). Therefore, we provide only the thrust–RPM relationship here.
The coefficients in Eq.~\ref{eq:thrust_curve} were obtained via static thrust-stand calibration: each motor was mounted on a load cell, and thrust was measured across 10 evenly spaced RPM levels from 0 to 3,650 RPM. The normalized speed is computed as $\Omega_i = \text{RPM}_i \times 1.047 \times 10^{-4}$, yielding a calibration range of $\Omega \in [0, 0.38]$.

\section{METHODOLOGY}
\label{sec:Wind_Estimation_Methodology}

Sec.~\ref{sec:wind_estimation_system_overview} presents a system overview, outlining data flow and module functions. 
Sec.~\ref{sec:stage1} details Stage I, which estimates external force via DOB (Sec.~\ref{sec:dob_design}) and improves aerodynamic sensitivity using a custom wind barrel (Sec.~\ref{sec:wind_barrel_design}).
Sec.~\ref{sec:Force_wind_relationship} describes Stage II, covering training data acquisition (Sec.~\ref{sec:wind_tunnel_data_collection}), modeling using TPS and polynomial regression (Sec.~\ref{subsec:data_fitting}), and real-time filtering (Sec.~\ref{sec:dynamice_filtering}).

\begin{figure}[t]
	\centering
	\includegraphics[width=1.0\columnwidth]{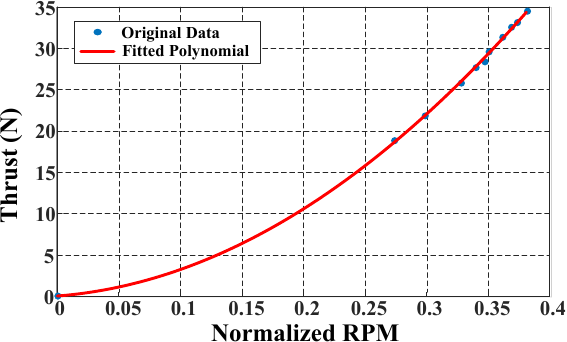}
 \caption{'Revolution-Thrust' curve for each motor.
		\label{fig:speed_thrust_curve}}
\end{figure}

\subsection{Wind Estimation System Overview}
\label{sec:wind_estimation_system_overview}

Our wind estimation system adopts a two-stage structure, as illustrated in Fig.~\ref{fig:system_design_frame}:

\textbf{Stage I: External Force Estimation (Fig.~\ref{fig:system_design_frame}(a))}: 
External aerodynamic forces $\bm{f}_e$ induced by ambient airflow are estimated by a disturbance observer (DOB). 
The DOB computes $\hat{\bm{f}}_e^I$ in the inertial frame $\mathcal{F}_I$ using measured thrust $u_f$, acceleration $\bm{\ddot{p}}$, and attitude $\bm{\eta}$ as inputs.
$\hat{\bm{f}}_e^I$ are then transformed into the intermediate frame $\mathcal{F}_C$ to obtain $\hat{\bm{f}}_e^C$ for subsequent wind estimation.

\textbf{Stage II: Force-to-Wind Mapping (Fig.~\ref{fig:system_design_frame}(b))}: 
The horizontal force components $(\hat{f}_{ex}^C, \hat{f}_{ey}^C)$ are used to compute the wind direction $\vartheta_r$, which is the same as the direction of the force. 
The horizontal wind speed $\hat{V}_{h}^C$ is then obtained from these components using a TPS model, while the vertical force $\hat{f}_{ez}^C$ is converted to vertical wind speed $\hat{V}_{v}^C$ via polynomial regression.
These components form the relative air velocity $\hat{\bm{A}}_r^C$, which is then transformed into the inertial frame $\mathcal{F}_I$ as $\hat{\bm{A}}_r^I$. 
Finally, the true wind vector $\hat{\bm{A}}_w^I$ is obtained by subtracting the UAV’s velocity $\dot{\bm{p}}$ from $\hat{\bm{A}}_r^I$.

\begin{figure*}[t]
\centering
\includegraphics[width=1.9\columnwidth]{ 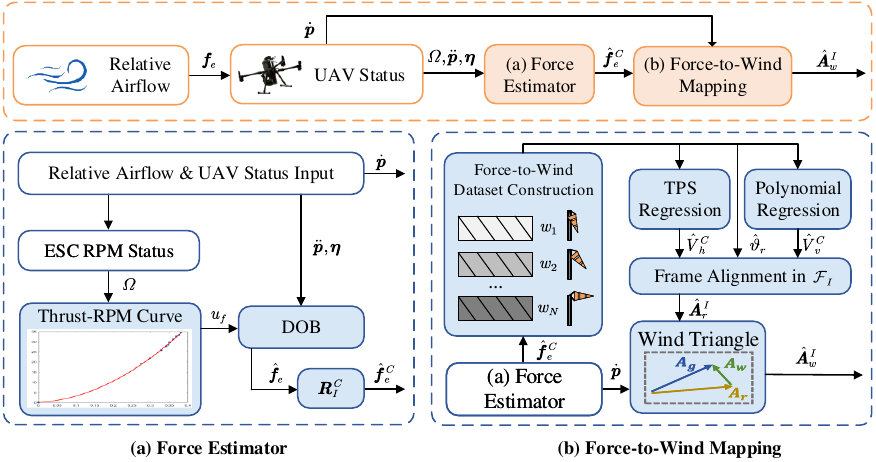}
\caption{
System Overview: 
(a) DOB-based external force estimation.  
(b) Force-to-wind mapping and wind inference.
}
\label{fig:system_design_frame}
\end{figure*}

\subsection{Stage I: DOB-Based External Force Estimation}
\label{sec:stage1}

\subsubsection{DOB Formulation}
\label{sec:dob_design}
We use a DOB to estimate the external disturbances caused by airflow in real time.
Our approach extends the work of Y{\"u}ksel~\textit{et al.}~\cite{2014AIM_NonlinearFourWrench_Yuksel}. 
The UAV’s configuration is described by $\bm{\Theta} = [\bm{p}^\top,\bm{\eta}^\top]^\top$, then we write \eqref{eq:dynamic1} in a compact form:
\begin{equation}
\label{eq:system_dynamic}
    \begin{aligned}
        \bm{d_e} = \bm{\mathcal{M}}\bm{\ddot{\Theta}}+\bm{\mathcal{C}}\bm{\dot{\Theta}}+\bm{\mathcal{D}}\bm{{\mu}}+\bm{g},
    \end{aligned}
\end{equation}
where 
\begin{equation}
\label{eq:dynamic_matrix_explain}
\begin{aligned}
    \bm{\mathcal{M}} &= \begin{bmatrix}
        m\mathbf{I}_3 & \mathbf{0}_{3 \times 3} \\
        \mathbf{0}_{3 \times 3} & \mathbf{J}
    \end{bmatrix}, \quad
    \bm{\mathcal{C}} = \begin{bmatrix}
        \mathbf{0}_{3 \times 3} & \mathbf{0}_{3 \times 3} \\
        \mathbf{0}_{3 \times 3} & \bm{\mathbb{C}}
    \end{bmatrix}, \\
    \bm{\mathcal{D}} &= \begin{bmatrix}
        \mathbf{R}\bm{e}_3 & \mathbf{0}_{3 \times 3} \\
        \mathbf{0}_{3 \times 3} & -\mathbf{I}_3
    \end{bmatrix}, \quad
    \bm{g} = \begin{bmatrix}
        -mg_0\bm{e}_3 \\
        \mathbf{0}_{3 \times 1} 
    \end{bmatrix}. 
\end{aligned}
\end{equation}
Here, 
$\bm{d_e} = [\bm{{f_e}}^\top, \bm{{\bm{\tau}_e}}^\top]^\top$ is the external wrench applied to the UAV;  
$\mathbf{I}_3\in {{\mathbb{R}}^{3\times 3}}$ is the identity matrix; 
$\bm{\mu} = [{u_f}, {\bm{u_\bm{\tau}}}^\top]^\top \in {{\mathbb{R}}^{4\times 1}}$ is the control input of total thrust and torque.

Following the formulation in~\cite{2000TIE_NDOB_Chen}, the DOB is described by:
\begin{equation}
\label{eq:we_derivative_from_chenwenhua}
\begin{aligned}
    \dot{\bm{\hat{d}}_e} = \bm{L}(\bm{\Theta}, \dot{\bm{\Theta}})\left(\bm{d}_e - \bm{\hat{d}}_e\right),
\end{aligned}
\end{equation}  
where 
$\bm{d}_e$ is the actual wrench as defined in \eqref{eq:system_dynamic};
$\bm{\hat{d}}_e$ is the estimated external wrench; 
and $\bm{L}(\bm{\Theta}, \dot{\bm{\Theta}}) \in {{\mathbb{R}}^{6\times 6}}$ is a gain matrix.
Since we do not assume any specific model for the external wrench and have no prior information about its time derivative~\cite{2000TIE_NDOB_Chen, 2014AIM_NonlinearFourWrench_Yuksel}, we assume:
\begin{equation}
\begin{aligned}
\label{eq:we_equals_zero0}
    \dot{\bm{d}}_e = \bm{0}.
\end{aligned}
\end{equation}
Define the observation error $\bm{e}$ as the difference between the actual external wrench $\bm{d_e}$ and its estimate $\bm{\hat{d}}_e$:
\begin{equation}
\begin{aligned}
\label{eq:observer_error}
    \bm{e} = \bm{d}_e - \bm{\hat{d}}_e.
\end{aligned}
\end{equation}  
Taking the derivative of \eqref{eq:observer_error} and substituting from \eqref{eq:we_derivative_from_chenwenhua}, we obtain:
\begin{equation}
\begin{aligned}
\label{eq:doteplusLe_equals_dotwe}
   \dot{\bm{e}} = -\bm{L}(\bm{\Theta}, \dot{\bm{\Theta}}) \bm{e}.
\end{aligned}
\end{equation}
The convergence of the observer error depends on the gain matrix $\bm{L}(\bm{\Theta}, \dot{\bm{\Theta}})$. 

To enable axis-wise tuning of convergence rates, we follow the design proposed in~\cite{2014AIM_NonlinearFourWrench_Yuksel} and define the gain matrix as: 
\begin{equation}
\begin{aligned}
\label{eq:Lqq_define}
    \bm{L}(\bm{\Theta}, \dot{\bm{\Theta}}) = \bm{K}_I \bm{\mathcal{M}}^{-1},
\end{aligned}
\end{equation}
where 
$\bm{K}_I \in \mathbb{R}^{6 \times 6}$ is a diagonal gain matrix with positive elements, allowing independent tuning of observer gains for each axis.

\begin{proposition}
Consider the wrench estimator defined by \eqref{eq:we_derivative_from_chenwenhua}. 
If the gain matrix is structured as \eqref{eq:Lqq_define},  then the estimated external wrench $\bm{\hat{d}}_e$ converges asymptotically to the actual external wrench $\bm{d_e}$, i.e., $\bm{\hat{d}}_e \rightarrow \bm{d_e}$.
\end{proposition}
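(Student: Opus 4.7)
The plan is to reduce the convergence claim to a linear time-invariant stability argument via a Lyapunov function. First I would substitute the proposed gain $\bm{L}(\bm{\Theta},\dot{\bm{\Theta}}) = \bm{K}_I \bm{\mathcal{M}}^{-1}$ into the error dynamics~\eqref{eq:doteplusLe_equals_dotwe} to obtain $\dot{\bm{e}} = -\bm{K}_I \bm{\mathcal{M}}^{-1}\bm{e}$. Since the mass $m$ and the body-frame inertia $\bm{J}$ are constant, the block-diagonal matrix $\bm{\mathcal{M}}$ defined in~\eqref{eq:dynamic_matrix_explain} is a constant symmetric positive-definite matrix, so $\bm{\mathcal{M}}^{-1}$ exists and is symmetric positive definite, and the apparent $(\bm{\Theta},\dot{\bm{\Theta}})$-dependence of $\bm{L}$ actually drops out. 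The error system is therefore linear and time-invariant.

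Next I would propose the Lyapunov candidate
\begin{equation*}
V(\bm{e}) = \tfrac{1}{2}\,\bm{e}^\top \bm{K}_I^{-1}\bm{e}.
\end{equation*}
Because $\bm{K}_I$ is diagonal with strictly positive entries, $\bm{K}_I^{-1}$ is symmetric positive definite, so $V$ is radially unbounded with $V(\bm{0})=0$ and $V(\bm{e})>0$ for $\bm{e}\neq\bm{0}$. Differentiating along trajectories gives
\begin{equation*}
\dot{V} = \bm{e}^\top \bm{K}_I^{-1}\dot{\bm{e}} = -\bm{e}^\top \bm{K}_I^{-1}\bm{K}_I\bm{\mathcal{M}}^{-1}\bm{e} = -\bm{e}^\top \bm{\mathcal{M}}^{-1}\bm{e},
\end{equation*}
which is bounded above by $-\lambda_{\min}(\bm{\mathcal{M}}^{-1})\|\bm{e}\|^2$ and is therefore negative definite. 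Standard Lyapunov theory then yields global exponential convergence $\bm{e}\to\bm{0}$, i.e.\ $\hat{\bm{d}}_e\to\bm{d}_e$, which is strictly stronger than the asymptotic convergence claimed.

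The key design insight exploited here is the deliberate pairing of $\bm{K}_I$ with $\bm{\mathcal{M}}^{-1}$ in the gain: choosing $\bm{P}=\bm{K}_I^{-1}$ for the Lyapunov weight makes the quadratic form collapse cleanly onto $\bm{\mathcal{M}}^{-1}$, bypassing any need to check that $\bm{K}_I$ and $\bm{J}$ commute or admit simultaneous diagonalization. I do not anticipate a serious technical obstacle; the only point warranting explicit comment is the verification that $\bm{\mathcal{M}}$ is time-invariant, since any time variation would introduce a $\dot{\bm{\mathcal{M}}}^{-1}$ cross-term in $\dot{V}$ and force a LaSalle-style or $\mathcal{H}_\infty$-style analysis instead.
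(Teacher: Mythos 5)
Your proof is correct, and it follows a genuinely different route from the paper's. The paper takes the inertia-weighted Lyapunov function $V=\bm{e}^\top\bm{\mathcal{M}}\bm{e}$, obtains $\dot V=-2\,\bm{e}^\top\bm{\mathcal{M}}\bm{K}_I\bm{\mathcal{M}}^{-1}\bm{e}$ after noting $\dot{\bm{\mathcal{M}}}=\bm{0}$, and then asserts that $-2\bm{\mathcal{M}}\bm{K}_I\bm{\mathcal{M}}^{-1}$ is negative definite because $\bm{\mathcal{M}}$ is positive definite and $\bm{K}_I$ is diagonal with positive entries. That last step is the delicate one: $\bm{\mathcal{M}}\bm{K}_I\bm{\mathcal{M}}^{-1}$ is similar to $\bm{K}_I$ (so its eigenvalues are positive), but it is not symmetric in general, and with $\bm{y}=\bm{\mathcal{M}}^{-1}\bm{e}$ the quadratic form equals $\bm{y}^\top\bm{\mathcal{M}}^2\bm{K}_I\bm{y}$; the symmetric part of a product of two positive-definite matrices need not be positive definite, so the paper's sign claim implicitly relies on $\bm{K}_I$ commuting with $\bm{\mathcal{M}}$ (e.g., a diagonal inertia $\bm{J}$, true for their platform but not stated). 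Your choice $V=\tfrac{1}{2}\bm{e}^\top\bm{K}_I^{-1}\bm{e}$ makes the gain cancel exactly, giving $\dot V=-\bm{e}^\top\bm{\mathcal{M}}^{-1}\bm{e}$, which is manifestly negative definite with no commutativity or simultaneous-diagonalizability assumption, and it yields global exponential convergence rather than the paper's asymptotic claim. In short, your argument is cleaner and strictly more robust for arbitrary symmetric positive-definite $\bm{\mathcal{M}}$ and diagonal positive $\bm{K}_I$; the only thing the paper's choice buys is the physical interpretation of $V$ as an inertia-weighted error energy inherited from the cited reference. Your closing remark about $\dot{\bm{\mathcal{M}}}$ is also apt --- both proofs hinge on the same time-invariance observation, which the paper states explicitly as well.
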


\begin{proof}
We consider the Lyapunov function candidate:
\begin{equation}
\begin{aligned}
    \bm{V}(\bm{e}, \bm{\Theta}) = \bm{e}^\top \bm{\mathcal{M}} \bm{e},
\end{aligned}
\end{equation}
is positive definite.
Considering \eqref{eq:doteplusLe_equals_dotwe} and \eqref{eq:Lqq_define}, we can write:
\begin{equation}
\label{eq:Liyapunuofu}
\begin{aligned}
    \frac{d\bm{V}(\bm{e}, \bm{\Theta})}{dt}&= -2 {\bm{e}^\top}{\bm{\mathcal{M}}}{\bm{K}_I}{\bm{\mathcal{M}}^{-1} \bm{e}}+{\bm{e}^\top}\dot{\bm{\mathcal{M}}}{\bm{e}} \\
    &= {\bm{e}^\top} ( -2  {\bm{\mathcal{M}}}{\bm{K}_I}{\bm{\mathcal{M}}^{-1}}+\dot{\bm{\mathcal{M}}})\bm{e},
\end{aligned}
\end{equation}
since $\bm{\mathcal{M}}$ is constant (mass and inertia are time-invariant), we have:
\begin{equation}
\label{eq:Bq_dot}
\begin{aligned}
    \dot{\bm{\mathcal{M}}} = \mathbf{0}_{6 \times 6}.
\end{aligned}
\end{equation}
Substituting into \eqref{eq:Liyapunuofu} yields:
\begin{equation}
\begin{aligned}
\label{eq:KI_condition}
    \frac{d\bm{V}(\bm{e}, \bm{\Theta})}{dt}= 
    {\bm{e}^\top} ( -2 {\bm{\mathcal{M}}}{\bm{K}_I}{\bm{\mathcal{M}}^{-1}} )\bm{e},
\end{aligned}
\end{equation}
since $\bm{\mathcal{M}}$ a positive-definite matrix and $\bm{K}_I$ is a diagonal matrix with positive elements, the matrix $-2 {\bm{\mathcal{M}}}{\bm{K}_I}{\bm{\mathcal{M}}^{-1}}$ is negative definite.
Thus, $\frac{d\bm{V}}{dt} \leq 0$, which ensures Lyapunov stability.
It follows that the observation error $\bm{e}(t)$ asymptotically converges to zero, completing the proof. 
\qedhere 
\end{proof}

Consequently, by substituting \eqref{eq:Lqq_define} into \eqref{eq:we_derivative_from_chenwenhua}, we obtain the observer formulation:
\begin{equation}
\label{eq:observer_matrix_equation}
\begin{aligned}
    \dot{\bm{\hat{d}}_e} = -\bm{K}_I \bm{\mathcal{M}}^{-1} \bm{\hat{d}}_e + \bm{K}_I \bm{\mathcal{M}}^{-1} (\bm{\mathcal{M}} \ddot{\bm{\Theta}} + \bm{\mathcal{C}} \dot{\bm{\Theta}} + \bm{\mathcal{D}} \bm{\mu} + \bm{g}).
\end{aligned}
\end{equation}

Through wind tunnel analysis, we found that while the external force $\bm{f}_e$ exhibits a one-to-one mapping to wind velocity $V_w$ (Sec.~\ref{sec:Force_wind_relationship}), the external torque $\bm{\tau}_e$ does not: the same torque magnitude and direction can correspond to multiple different wind conditions, making torque unsuitable for unambiguous wind reconstruction.
Hence, for the purpose of wind velocity estimation, we extract and discretize the estimated forces part from \eqref{eq:observer_matrix_equation}.
This yields the following discrete-time representation of the force estimator:
\begin{equation}
\label{eq:simple_DOB}
\begin{aligned}
    \bm{\hat{f}}_e(k+1) &= \left( \bm{I}_3 - \frac{\delta t}{2m} \bm{K}_I \right) \bm{\hat{f}}_e(k) \\ &+ \frac{\delta t}{2m} \bm{K}_I \left( m \ddot{\bm{p}} - \bm{g} + u_f \bm{z}_B \right), 
\end{aligned}
\end{equation}
where 
$\delta t$ is the time step;
$\bm{z}_B$ is the third column of $\bm{R}$, representing the body-frame $z$-axis in $\mathcal{F}_I$;
$\bm{\hat{f}}_e$ is the estimated external force in $\mathcal{F}_I$.
The resulting $\bm{\hat{f}}_e$ is transformed into $\mathcal{F}_C$ to yield:
\begin{equation}
    \bm{\hat{f}}_e^C = [\hat{f}_{ex}^C, \hat{f}_{ey}^C, \hat{f}_{ez}^C]^\top.
\end{equation}

\subsubsection{Wind Barrel Aerodynamic Design and Validation}
\label{sec:wind_barrel_design}

To improve wind sensitivity, especially under low-speed conditions, increasing aerodynamic drag is essential to amplify the system’s measurable airflow response.
To this end, we explored the effect of barrel surface texture and developed a customized design that enhances drag.
We evaluated three barrel surface designs, illustrated in Fig.~\ref{fig:wind_barrel_in_wind_tunnel}(a):  
\begin{enumerate}[label=-, leftmargin=*, align=left]
\item 
Design A: cross-lattice texture with large grooves (\SI{6}{mm}$\times$\SI{6}{mm}$\times$\SI{13}{mm});
\item 
Design B~\cite{2011CFL_ReynoldsEffects_Wornom}: smooth surface;
\item 
Design C (ours): same lattice pattern as A but with shallower grooves (\SI{6}{mm}$\times$\SI{6}{mm}$\times$\SI{1.5}{mm}).
\end{enumerate}
All three designs share identical geometry and cross-sectional area, isolating surface structure as the only variable.

\begin{figure}[t]
\vspace{0.0cm}
\centering
\includegraphics[width=0.9\columnwidth]{ 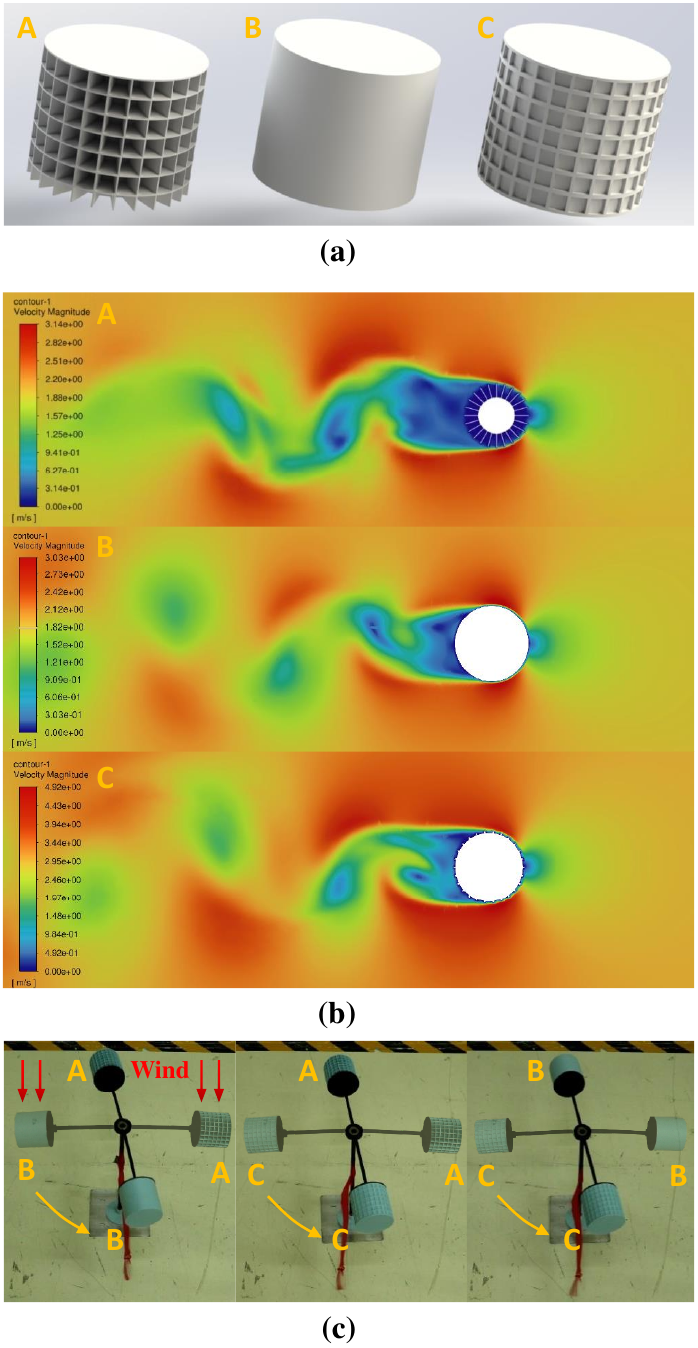}
\caption{
Three wind barrel designs and their evaluation.  
(a) Surface textures of each design.  
(b) CFD simulation results showing airflow and wake patterns.  
(c) Wind tunnel test showing crossbar rotation due to drag differences.
}
\label{fig:wind_barrel_in_wind_tunnel}
\vspace{-0.0cm}
\end{figure}

CFD simulations (Fig.~\ref{fig:wind_barrel_in_wind_tunnel}(b)) reveal distinct wake structures and drag coefficients.
A produces the most symmetric and narrow wake, indicating smooth flow with minimal separation and thus the lowest drag coefficient ($C_d = 0.97$).
In contrast, B exhibits pronounced vortex shedding and a significantly wider wake, reflecting stronger flow separation and a higher drag coefficient ($C_d = 1.30$).
C shows even more intense vortex shedding than B, indicating further increased flow separation, resulting in the highest drag coefficient ($C_d = 1.47$).
These findings confirm that C most effectively increases aerodynamic drag under low-speed flow, making it well-suited for force-based wind sensing.

To experimentally validate the CFD results, we conducted wind tunnel tests using a suspended crossbar setup, as shown in Fig.~\ref{fig:wind_barrel_in_wind_tunnel}(c). 
The crossbar was mounted horizontally and positioned perpendicular to the incoming wind (red arrow), with two different wind barrel designs installed at its ends (e.g., A vs.~B, A vs.~C, B vs.~C), enabling direct comparison of their aerodynamic performance under identical conditions.

The drag force on each barrel is given by
\begin{equation}
\label{eq:resistance_euqation}
F_d = \frac{1}{2} \rho v^2 A C_d,
\end{equation}
where $F_d$ is the drag force, $\rho$ the air density, $v$ the wind speed, $A$ the cross-sectional area, and $C_d$ the drag coefficient. 
Since $A$ is fixed across designs, differences in $F_d$ stem solely from $C_d$.
This imbalance generates a torque that causes the crossbar to rotate, with the higher-drag side moving downwind.

In Fig.~\ref{fig:wind_barrel_in_wind_tunnel}(c), the yellow arrow indicates this rotation.
In the A–B comparison, the crossbar rotated such that B moved downwind, suggesting B generated more drag than A.
In tests involving C, the crossbar consistently rotated with C moving downwind, indicating it produced more drag than both A and B.
These experimental observations confirm the CFD results, which identified C as having the highest $C_d$, validating its advantage in boosting aerodynamic force and improving the signal-to-noise ratio (SNR) in wind sensing.

\subsection{Stage II: Force-to-Wind Mapping}
\label{sec:Force_wind_relationship}

\subsubsection{Wind Tunnel Data Collection}
\label{sec:wind_tunnel_data_collection}

A dedicated dataset was collected in a controlled wind tunnel environment to train the force–to–wind mapping model. 
The UAV was commanded to hover under steady laminar airflow with onboard sensor data recorded continuously at \SI{50}{Hz}.

\paragraph{Horizontal Wind Data}
To characterize the UAV’s aerodynamic response under various horizontal wind conditions, yaw-sweep data were collected at wind speeds ranging from \SI{0}{} to \SI{8}{m/s}, in \SI{1}{m/s} increments.
At each wind speed $V_w$, the UAV maintained a stable hover and was slowly rotated about its yaw axis. 
The yaw angle $\psi$ was incremented by $\ang{10}$ every \SI{20}{\second}, completing a full \ang{360} sweep per speed level.

The horizontal force vector is characterized by its magnitude and direction in $\mathcal{F}_C$:
\begin{equation}
\label{eq:model_input1}
\begin{aligned}
\hat f_h^C &= \sqrt{(\hat{f}_{ex}^C)^2 + (\hat{f}_{ey}^C)^2}, \\
\hat \vartheta_f^C &= \arctan2(\hat{f}_{ex}^C, \hat{f}_{ey}^C),
\end{aligned}
\end{equation}
where $\hat f_h^C$ and $\hat \vartheta_f^C$ are the magnitude and direction of the estimated horizontal external force. 

\begin{figure}[t]
\vspace{-0.0cm}
\centering
\includegraphics[width=1.0\columnwidth]{ 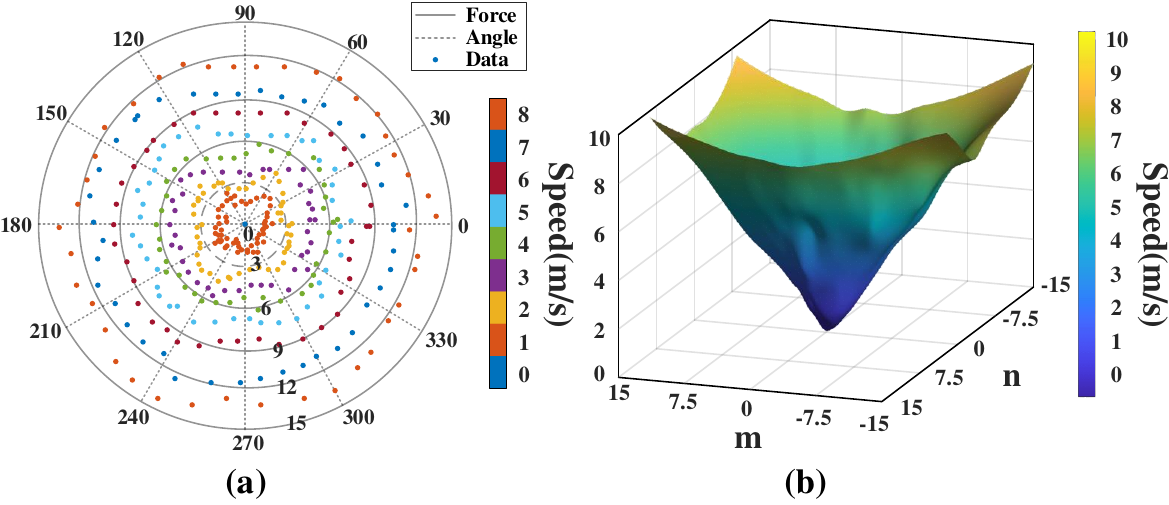}
\caption{
Horizontal force–wind mapping using TPS regression. 
(a) Training data distribution. 
(b) Fitted TPS surface.
}
\label{fig:fitting_model}
\vspace{-0.0 cm}
\end{figure}

Fig.~\ref{fig:fitting_model}(a) visualizes the collected horizontal data in polar coordinates: radial distance represents $\hat f_h^C$, angle denotes $\hat \vartheta_f^C$, and the color gradient encodes the corresponding wind speed. 
The non-circular distribution of points shows the UAV’s anisotropic aerodynamic response across different flow directions, showing that the same wind speed may produce different force magnitudes depending on orientation.

\paragraph{Vertical Wind Data}
Unlike the horizontal direction, vertical wind cannot be easily generated in the wind tunnel, thus we use UAV vertical motion to simulate equivalent vertical airflow. 

To generate an equivalent vertical airflow, the UAV was commanded to perform controlled vertical ascents and descents along $\bm{z}_I$ in still air. 
The UAV was commanded to ascend or descend at constant vertical speeds ranging from \SI{0}{} to \SI{5}{m/s}, increasing in \SI{1}{m/s} increments. 
At each speed level, the motion was sustained for \SI{20}{\second}.
This procedure effectively induces an aerodynamic equivalent of vertical wind, enabling construction of the vertical force–wind model in the absence of actual environmental airflow.

\subsubsection{Force-to-Wind Mapping}
\label{subsec:data_fitting}

To estimate wind velocity from onboard force measurements, we construct a data-driven pipeline that maps estimated external forces to wind components. 
Horizontal wind speed $\hat V_{h}^C$ is obtained by fitting a thin-plate spline model to $(\hat{f}_{ex}^C, \hat{f}_{ey}^C)$, while vertical wind speed $\hat V_{v}^C$ is estimated via polynomial regression on $\hat{f}_{ez}^C$. 
The full wind vector is finally reconstructed in the inertial frame using $\hat V_{h}^C$, $\hat V_{v}^C$, and the estimated direction $\hat \vartheta_f^C$.

\paragraph{TPS Regression of Horizontal Wind Speed}
The mapping between horizontal external force and airflow is modeled using a TPS regression method~\cite{2003JRSS_TPS_Wood}. 

The horizontal force vector, represented in polar form as $(\hat f_h^C, \hat \vartheta_f^C)$, is first converted into Cartesian coordinates:
\begin{equation}
\label{eq:model_input2}
\begin{aligned}
    m &= \hat f_h^C\cos{\hat \vartheta_f^C}, \\
    n &= \hat f_h^C\sin{\hat \vartheta_f^C}.
\end{aligned}
\end{equation}

Using input $(m, n)$, we trained a TPS model $Z(\cdot)$ to predict the corresponding horizontal wind speed $\hat V_{h}^C$:

\begin{equation}
\label{eq:model_input3}
\begin{aligned}
\hat V_{h}^C = Z \left( m,n \right) = Z \left( \hat f_h^C\cos{\hat \vartheta_f^C}, \hat f_h^C\sin{\hat \vartheta_f^C} \right).
\end{aligned}
\end{equation}

The TPS function takes the following form:

\begin{equation}
\label{eq:TPS_model1}
\begin{aligned}
Z(m, n) = \sum_{i=1}^{N} c_i \phi\big(| (m, n) - (m_i, n_i) |\big) + a_0 + a_1 m + a_2 n,
\end{aligned}
\end{equation}
where $\phi(r) = r^2 \log{r}$ is the radial basis function, and $(m_i, n_i)$ is the $i$-th training input. 
The parameters $c_i$, $a_0$, $a_1$, and $a_2$ are fitted by minimizing the regularized loss:

\begin{equation}
\label{eq:TPS_model2}
\begin{aligned}
    \min_{a_0, a_1, a_2, c_i} \left\{ \sum_{i=1}^{N} \left( V_{h,i}^C - Z(m_i, n_i) \right)^2 + \lambda J(Z) \right\},
\end{aligned}
\end{equation}
where $V_{h,i}^C$ is the measured horizontal wind speed in $\mathcal{F}_C$ for the $i$-th sample, $\lambda \ge 0$ is a regularization parameter, and $J(Z)$ is a roughness penalty term controlling surface curvature. 
Further details on TPS modeling and regularization can be found in~\cite{2003JRSS_TPS_Wood}.

Fig.~\ref{fig:fitting_model}(b) shows the fitted TPS surface mapping inputs $(m, n)$ to horizontal wind speed $\hat V_{h}^C$.
The TPS formulation effectively balances fidelity to the training data and smoothness through the regularization term in \eqref{eq:TPS_model2}, allowing the model to capture localized variations in aerodynamic response while suppressing high-frequency noise.
This regularization ensures that the learned force-to-wind relationship faithfully represents underlying physical characteristics without overfitting.

\paragraph{Polynomial Regression of Vertical Wind Speed}
To estimate vertical wind speed $\hat V_{v}^C$, a simple polynomial regression model was fitted to relate the estimated vertical external force $\hat{f}_{ez}^C$ to $\hat V_{v}^C$:
\begin{equation}
\label{eq:wind_force_z}
\begin{aligned}
    \hat V_{v}^C = \sum_{k=1}^{K} c_k \cdot (\hat{f}_{ez}^C)^k,
\end{aligned}
\end{equation}
where 
$K$ is the polynomial degree, and $c_k$ are the regression coefficients obtained during training.

\paragraph{Wind Vector Reconstruction}
The estimated values $\hat \vartheta_f^C$, $\hat V_{h}^C$, and $\hat V_{v}^C$ from \eqref{eq:model_input1}, \eqref{eq:model_input3}, and \eqref{eq:wind_force_z}, 
are used to construct the relative air velocity $\hat{\bm{A}}_r^C$ in $\mathcal{F}_C$:
\begin{equation}
\label{eq:construct_Arc}
\hat{\bm{A}}_r^C =
\begin{bmatrix}
\hat V_{h}^C \cos\hat \vartheta_f^C \\
\hat V_{h}^C \sin\hat \vartheta_f^C \\
\hat V_{v}^C
\end{bmatrix},
\end{equation}

This vector is then rotated into $\mathcal{F}_I$ as $\hat{\bm{A}}_r^I$ via:
\begin{equation}
\label{eq:fitting_final}
\hat{\bm{A}}_r^I = \mathbf{R}_{IC} \hat{\bm{A}}_r^C,
\end{equation}
where 
$\mathbf{R}_{IC}$ is the rotation matrix from $\mathcal{F}_C$ to $\mathcal{F}_I$. 

According to the wind triangle principle, the true wind velocity $\hat{\bm{A}}_w^I$ in $\mathcal{F}_I$ is computed by subtracting the UAV's ground velocity $\dot{\bm{p}}$ from the air-relative velocity $\hat{\bm{A}}_r^I$:
\begin{equation}
\label{eq:wind_vector}
\hat{\bm{A}}_w^I = \hat{\bm{A}}_r^I - \dot{\bm{p}}.
\end{equation}

Given the wind vector $\hat{\bm{A}}_w^I = [\hat A_{wx}, \hat A_{wy}, \hat A_{wz}]^\top$ in $\mathcal{F}_I$, the horizontal wind direction $\hat \vartheta_h^I$, speed $\hat V_{h}^I$, and vertical speed $\hat V_{v}^I$ are computed as:
\begin{equation}
\label{eq:dynamic_fitting_equ}
\begin{aligned}
\hat \vartheta_{h}^I &= \arctan2(\hat A_{wy}, \hat A_{wx}) + \pi, \\
\hat V_{h}^I &= \sqrt{\hat A_{wx}^2 + \hat A_{wy}^2}, \\
\hat V_{v}^I &= \hat A_{wz},
\end{aligned}
\end{equation}
where the additional $\pi$ compensates for the \ang{180} offset between the force orientation and the true wind direction.
A positive value of $\hat V_{v}^I$ indicates upward airflow, while a negative value indicates downward airflow along $\bm{z}_I$.
This completes the reconstruction of the full wind velocity in $\mathcal{F}_I$.


\begin{table*}[t]
\centering
\caption{
The mean ($e_\mu$) and SD ($e_\sigma$) of DOB-estimated force errors under different ground-truth force magnitudes applied along the $\bm{x}_I$, $\bm{y}_I$, and $\bm{z}_I$ axes.
}
\label{tab:dob_force_levels}
\fontsize{12}{13}\selectfont
\begin{tabularx}{\textwidth}{>{\centering\arraybackslash}m{3cm} | *{6}{>{\centering\arraybackslash}X}}
\specialrule{0.2em}{0em}{0.2em}
\textbf{GT Force (N)} 
  & \multicolumn{2}{c}{4.89} 
  & \multicolumn{2}{c}{9.78} 
  & \multicolumn{2}{c}{12.71} \\
\cmidrule(lr){2-3} \cmidrule(lr){4-5} \cmidrule(lr){6-7}
\textbf{Axis} 
  & $e_\mu$ (N) & $e_\sigma$ (N)
  & $e_\mu$ (N) & $e_\sigma$ (N)
  & $e_\mu$ (N) & $e_\sigma$ (N) \\
\midrule
$\bm{x}_I$-direction 
  & -0.27 & 0.09 
  & -0.25 & 0.10 
  & -0.20 & 0.07 \\
$\bm{y}_I$-direction 
  & -0.72 & 0.16 
  & -0.71 & 0.15 
  & -0.73 & 0.13 \\
$\bm{z}_I$-direction 
  & -1.88 & 0.33 
  & -1.85 & 0.22 
  & -1.87 & 0.19 \\
\bottomrule
\end{tabularx}
\end{table*}

\begin{table*}[ht]
\centering
\caption{
Wind tunnel validation: 
RMSE of estimated wind speed $\varepsilon_v$ (\si{m/s}) and direction $\varepsilon_\theta$ (\si{\degree}) at selected ground-truth wind speeds.
}
\label{tab:wind_tunnel_big_table}
\renewcommand{\arraystretch}{1.5}
\begin{tabularx}{\textwidth}{>{\centering\arraybackslash}m{3cm} | *{10}{>{\centering\arraybackslash}X}}
\specialrule{0.2em}{0em}{0.2em}
\textbf{GT Wind Speed (\si{m/s})} 
  & \multicolumn{2}{c}{0} & \multicolumn{2}{c}{1} 
  & \multicolumn{2}{c}{5} & \multicolumn{2}{c}{9} 
  & \multicolumn{2}{c}{10}  \\
\cmidrule(lr){2-3}\cmidrule(lr){4-5}\cmidrule(lr){6-7}\cmidrule(lr){8-9}\cmidrule(lr){10-11}
\textbf{Method} 
  & $\varepsilon_v$ (\si{m/s}) & $\varepsilon_\theta$ (\si{\degree})
  & $\varepsilon_v$ (\si{m/s}) & $\varepsilon_\theta$ (\si{\degree})
  & $\varepsilon_v$ (\si{m/s}) & $\varepsilon_\theta$ (\si{\degree})
  & $\varepsilon_v$ (\si{m/s}) & $\varepsilon_\theta$ (\si{\degree})
  & $\varepsilon_v$ (\si{m/s}) & $\varepsilon_\theta$ (\si{\degree}) \\
\midrule
Zimmerman~\textit{et al.}~\cite{2022Measurement_WindML1_Zimmerman}
  & 0.62 & 13.2 & 0.44 & 12.6 & 0.17 & 9.4 & 0.16 & 8.9 & 0.17 & 7.8 \\
Yu~\textit{et al.}~\cite{2024IROS_DOB_Yu}
  & 0.48 & 11.0 & 0.39 & 8.4 & 0.15 & 5.1 & 0.15 & 5.0 & 0.17 & 5.1 \\
Proposed
  & \textbf{0.32} & \textbf{9.8} & \textbf{0.17} & \textbf{7.6} & \textbf{0.08} & \textbf{5.0} & \textbf{0.06} & \textbf{3.7} & \textbf{0.06} & \textbf{3.6} \\
\bottomrule
\end{tabularx}
\end{table*}

\subsubsection{Dynamic Filtering for Wind Estimate Refinement}
\label{sec:dynamice_filtering}

Although the wind vector is reconstructed via force-to-wind mapping, it remains sensitive to high-frequency noise, primarily due to sensor noise and numerical differentiation in DOB estimation. 
Therefore, we apply temporal filtering to refine the signal quality.
According to~\cite{2020TMECH_Generalized_DOBVtol_Zheng}, the DOB maintains accurate gain up to approximately \SI{0.5}{\hertz}, sufficient for low-altitude wind monitoring tasks where rapid response is less critical.
To enhance real-time usability, we apply a dynamic post-processing filter. 
This filter uses gain scheduling to suppress high-frequency noise while preserving temporal responsiveness.
Specifically, the filter gain adapts to the estimated wind speed: higher gains are applied under low-speed conditions to improve the SNR, while lower gains are used at higher speeds to minimize phase lag.

Fig.~\ref{fig:dynamic_filter} compares three approaches: unfiltered wind estimation, conventional static filtering, and the proposed dynamic filter. 
The proposed dynamic filter (blue curve) effectively suppresses high-frequency noise without introducing noticeable delay. 
At a wind speed of \SI{10}{m/s}, it reduces lag by \SI{3.4}{\second} compared to static filtering, as highlighted by the black arrow. 
Meanwhile, it maintains close temporal alignment with the unfiltered signal while offering substantial noise suppression.
These results confirm that dynamic filtering improves both the clarity and timeliness of wind estimates, making them more suitable for practical applications.

\begin{figure}[t]
\centering
\includegraphics[width=1.0\columnwidth]{ 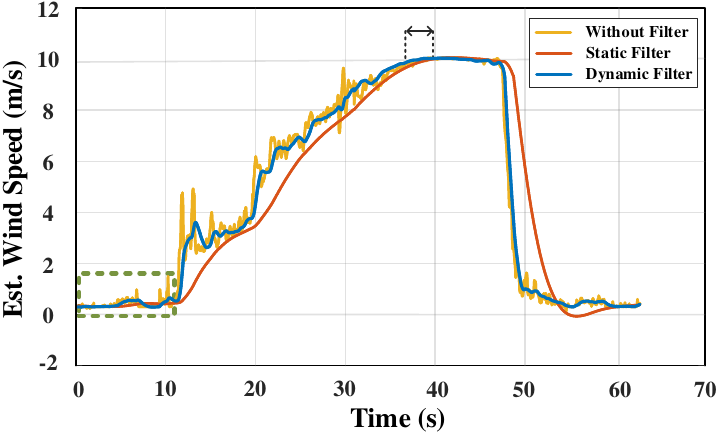}
\caption{
Comparison of wind estimation using unfiltered, static-filtered, and dynamic-filtered methods.
The dynamic filter effectively reduces noise and remains responsive.
}
\label{fig:dynamic_filter}
\vspace{-0.0cm}
\end{figure}


\section{EXPERIMENT AND RESULTS}
\label{sec:Experiment}
\subsection{Hardware Platform}
The DJI Matrice 300 UAV (see Fig.~\ref{fig:m300_combine_coord}) was used as the experimental platform.
To improve wind sensitivity and estimation accuracy, a custom wind barrel was mounted beneath the UAV.
The wind barrel is a cylindrical structure with an inner diameter of \SI{105}{mm}, outer diameter of \SI{106.5}{mm}, height of \SI{150}{mm}, and mass of \SI{105}{g}.
The barrel is installed in an underslung configuration to enhance aerodynamic response without increasing the vehicle's geometric.
The performance of the wind barrel is described in Sec.~\ref{sec:wind_barrel_design}.
The modified UAV is capable of reaching a maximum flight speed of \SI{15}{m/s}. 

\subsection{Validation of DOB-Based Force Estimation}

To validate the performance of the DOB-based external force estimation, we conducted static experiments using a tensiometer. 
Ground-truth (GT) forces of \SI{4.89}{N}, \SI{9.78}{N}, and \SI{12.71}{N} were sequentially applied along the $\bm{x}_I$, $\bm{y}_I$, and $\bm{z}_I$ axes.
Table~\ref{tab:dob_force_levels} summarizes the mean ($e_\mu$) and standard deviation (SD) ($e_\sigma$) of DOB force estimation errors for each axis and force magnitude.
For the $\bm{x}_I$-axis, $e_\mu$ remained small, ranging from \SI{-0.27}{N} to \SI{-0.20}{N}, with a low $e_\sigma$ between \SI{0.07}{N} and \SI{0.10}{N}.
For the $\bm{y}_I$-axis, the DOB also exhibited a consistent negative bias of around \SI{0.72}{N}, with $e_\sigma$ ranging from \SI{0.13}{N} to \SI{0.16}{N}.
The $\bm{z}_I$-axis demonstrated the largest $e_\mu$, consistently around \SI{-1.87}{N}. 
Despite the larger bias, the $e_\sigma$ decreased as the applied force increased, from \SI{0.33}{N} at \SI{4.89}{N} to \SI{0.19}{N} at \SI{12.71}{N}, indicating improved stability under stronger forces.

The estimation errors were primarily due to sensor noise. 
Across all axes and force magnitudes, the low $e_{\sigma}$ indicates consistent and reliable DOB outputs. 
Although a systematic bias exists in $e_\mu$, it does not compromise wind estimation accuracy. 
This is because there exists a one-to-one mapping between the external force $\bm{f}_e^C$ and the wind vector $\hat{\bm{A}}_w^I$. 
As long as the DOB consistently estimates $\bm{f}_e^C$ as $\bm{\hat{f}}_e^C$, the system can learn a reliable mapping from $\bm{\hat{f}}_e^C$ to $\hat{\bm{A}}_w^I$. 
The small $e_\sigma$ ensures that this mapping remains stable and accurate despite the bias. 
Therefore, the proposed DOB-based force estimation is robust to sensor noise and adaptable to different UAV platforms via parameter tuning.


\subsection{Wind Estimation: Experimental Setup}
To systematically evaluate our wind estimation method, we conduct experiments in a wind tunnel, indoor space, and outdoor environment. 

\paragraph{Ground-Truth Wind Measurement}
In wind-tunnel experiments, the horizontal flow was set and controlled by the tunnel, while vertical wind components were not measured.
Indoor experiments assumed still air, with a ground-truth wind speed of \SI{0}{m/s} and direction of \ang{0} (equivalently \ang{360}).
For outdoor experiments, a 2D ultrasonic anemometer placed \SI{6}{m} from the UAV and sampled at \SI{0.3}{Hz} provided horizontal wind GT.
Vertical wind estimation was evaluated only indoors due to the lack of reliable vertical sensing for the wind-tunnel and outdoor setups.

\paragraph{Baselines}
We compared the proposed method with a regularized polynomial regression method (Yu~\textit{et al.}~\cite{2024IROS_DOB_Yu}) and a learning-based method (Zimmerman~\textit{et al.}~\cite{2022Measurement_WindML1_Zimmerman}). 
Both baselines were re-implemented on our UAV platform using consistent sensor configurations.  
Since these baselines estimate only horizontal wind, comparisons were limited to horizontal wind estimation.
The learning-based method was trained using 150,000 samples with 11 input features, including motor speed ($\Omega$), attitude ($\bm{\eta}$), and acceleration ($\bm{\ddot{p}}$). 
In contrast, our TPS model used only 289 averaged samples with inputs $\left( m,n \right)$ and $V_{h}^C$ as defined in Sec.~\ref{sec:Force_wind_relationship}.

\paragraph{Evaluation Metrics}
Wind estimation performance was quantified using two primary metrics: root-mean-square error (RMSE) and the Pearson correlation coefficient ($r$), evaluated for both wind speed and direction relative to ground truth.
RMSE reflects absolute estimation accuracy, while $r$ captures the linear correlation between estimated and true values.
Additionally, the mean and standard deviation (SD) of RMSE across multiple samples were computed to assess average performance and consistency.

\subsection{Wind Estimation: Wind Tunnel Validation}

Wind tunnel experiments with controlled horizontal winds were conducted to assess the accuracy, robustness, and generalization of the proposed method.
The wind speed was gradually increased from \SI{0}{m/s} to \SI{10}{m/s} in \SI{1}{m/s} increments. 
At each speed level, the UAV was held stationary at a fixed angle to the wind for \SI{20}{\second}. 
The test data at \SI{9}{m/s} and \SI{10}{m/s} beyond the training range of \SIrange{0}{8}{m/s} were used to assess model extrapolation.

Fig.~\ref{fig:wind_tunnel_temp_points} shows the RMSE and SD of estimated horizontal wind speed and direction across wind speeds from \SI{0}{} to \SI{10}{m/s}.
The white region indicates interpolation within the training range (\SI{0}{}–\SI{8}{m/s}), while the gray area denotes extrapolation (\SI{9}{}–\SI{10}{m/s}).
For wind speed (top panel), all methods exhibited decreasing RMSE and SD with increasing wind speed, stabilizing above \SI{3}{m/s}. 
This trend reflects changes in SNR: at low speeds, weaker aerodynamic forces reduce SNR, amplifying sensor noise; at higher speeds, SNR improves, reducing estimation errors.
For wind direction (bottom panel), the proposed method and Yu~\textit{et al.}~\cite{2024IROS_DOB_Yu} achieved comparable RMSE across the full range.
Overall, the proposed method consistently achieved the lowest RMSE and SD, outperforming all baselines in both interpolation and extrapolation regimes.

\begin{figure*}[t]
\centering
\includegraphics[width=1.9\columnwidth]{ 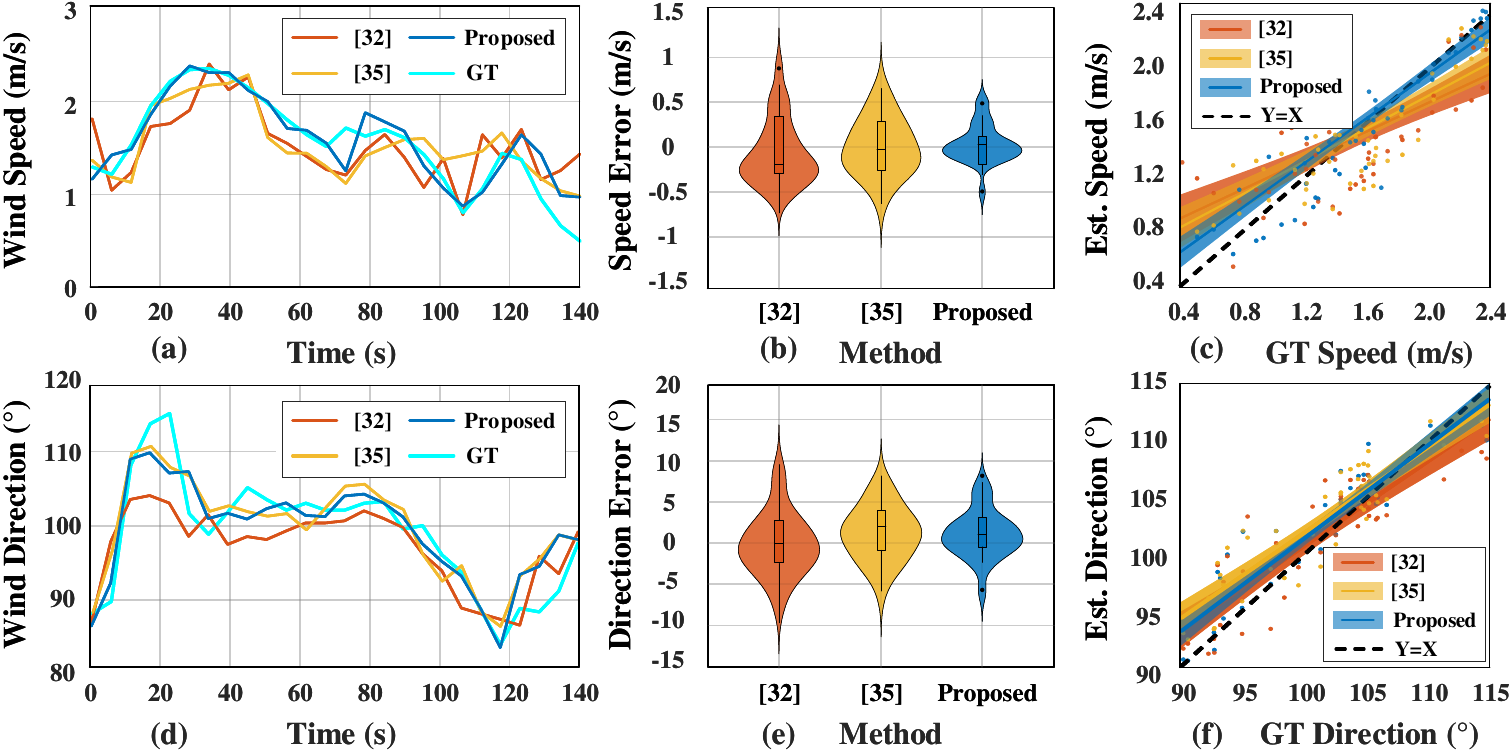}
\caption{ 
Outdoor hover.  
(a) Wind speed estimates over time.
(b) Estimated wind speed error distribution.
(c) Correlation between GT speed and estimated speed.
(d) Wind direction estimates over time.
(e) Estimated wind direction error distribution.
(f) Correlation between GT direction and estimated direction.
}
\label{fig:outdoor_hover}
\end{figure*}

Table~\ref{tab:wind_tunnel_big_table} reports the RMSE of each method at selected wind speeds. The proposed method consistently achieves the lowest estimation error across all tested wind speeds, with up to \SI{65}{\%} reduction in speed RMSE and \SI{54}{\%}  reduction in direction RMSE compared to the strongest baseline. 
Notably, even at the highest speed (\SI{10}{m/s})—beyond the training range—the proposed method maintains superior accuracy, confirming robust extrapolation capability.

Compared to the global polynomial fit in Yu~\textit{et al.}~\cite{2024IROS_DOB_Yu}, our TPS model better preserves force-to-wind mapping features, improving accuracy.
The end-to-end learning model in Zimmerman~\textit{et al.}~\cite{2022Measurement_WindML1_Zimmerman} directly maps sensor data to wind estimates but performs worse in our setup---likely due to the lack of physical constraints, which makes it prone to overfitting and less physically consistent.
In contrast, our method explicitly incorporates the physical relationship between wind and aerodynamic force, enhancing interpretability and consistency.
Overall, the proposed method demonstrates superior performance and generalization under controlled laminar flow.

\begin{figure}[t]
\vspace{-0.0cm}
\centering
\includegraphics[width=1.0\columnwidth]{ 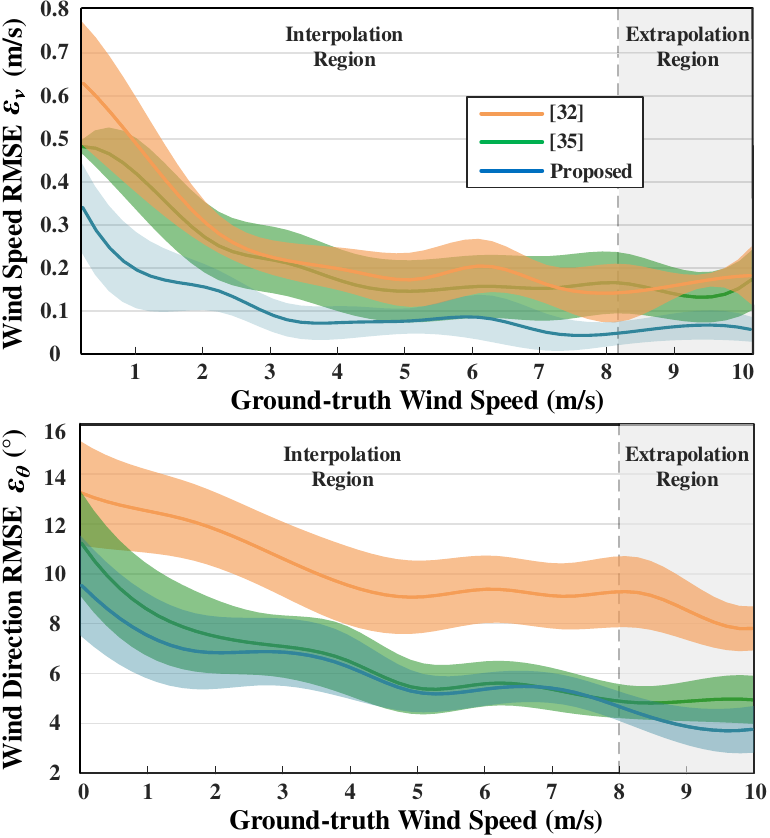}
\caption{
Wind tunnel validation: 
RMSE (solid lines) and SD (shaded areas) of estimated horizontal wind speed and direction over ground-truth wind speeds from \SI{0}{} to \SI{10}{m/s}.
}
\label{fig:wind_tunnel_temp_points}
\vspace{-0.0cm}
\end{figure}

\begin{figure*}[t]
\vspace{-0.0cm}
\centering
\includegraphics[width=1.95\columnwidth]{ 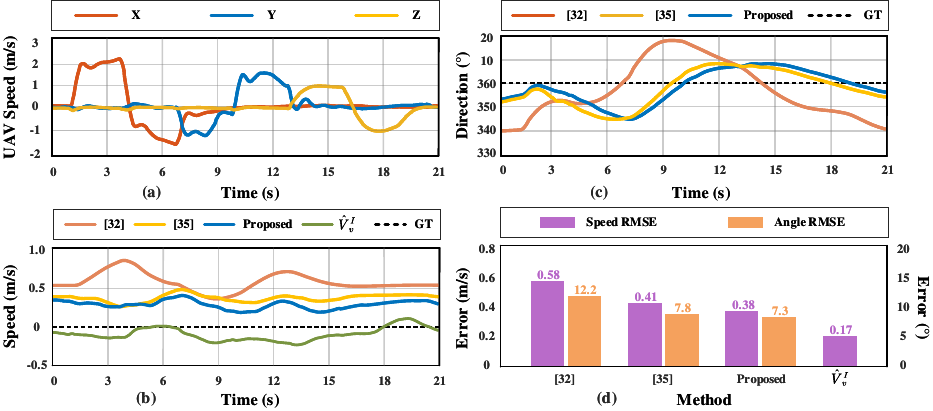}
\caption{
Indoor dynamic flight: fast lateral translation.
(a) UAV speeds along $\bm{x}_I$, $\bm{y}_I$, and $\bm{z}_I$.
(b) Estimated wind speeds $\hat V_{h}^I$ (proposed) and $\hat V_{v}^I$.
(c) Estimated horizontal wind direction $\hat \vartheta_{h}^I$.
(d) RMSE of wind speeds and horizontal wind direction across all methods.
}
\label{fig:IndoorDynamic_FastLateralTranslatione}
\vspace{-0.0cm}
\end{figure*}

\subsection{Wind Estimation: Field Validation}

\subsubsection{Outdoor Hover}
\label{subsubsec:Outdoor_Hover}

To assess estimation accuracy under static, real-world conditions, an outdoor hovering experiment was conducted, with results summarized in Fig.~\ref{fig:outdoor_hover}.

As shown in Fig.~\ref{fig:outdoor_hover}(a), the proposed method (blue) closely matches the ground truth (cyan), outperforming both Zimmerman~\textit{et al.}~\cite{2022Measurement_WindML1_Zimmerman} (brownish red) and Yu~\textit{et al.}~\cite{2024IROS_DOB_Yu} (yellow). 
The proposed method achieved a wind speed RMSE of \SI{0.22}{m/s}, representing relative improvements of \SI{41}{\%} and \SI{27}{\%} compared to Zimmerman~\textit{et al.}~\cite{2022Measurement_WindML1_Zimmerman} (\SI{0.37}{m/s}) and Yu~\textit{et al.}~\cite{2024IROS_DOB_Yu} (\SI{0.30}{m/s}), respectively.
The wind speed error distributions (Fig.~\ref{fig:outdoor_hover}(b), left) demonstrate the proposed method's superior performance with a sharp, zero-centered peak, in contrast to the wider error ranges of the other methods.
$r$ between estimated and ground-truth wind speeds (Fig.~\ref{fig:outdoor_hover}(c)) was $0.94$ for the proposed method, surpassing Zimmerman~\textit{et al.}~\cite{2022Measurement_WindML1_Zimmerman} ($r = 0.75$) and Yu~\textit{et al.}~\cite{2024IROS_DOB_Yu} ($r = 0.83$).
Fig.~\ref{fig:outdoor_hover}(d) illustrates the estimated wind direction during hover.
The proposed method achieved an RMSE of \ang{3.3}, which is \SI{52}{\%} lower than Zimmerman~\textit{et al.}~\cite{2022Measurement_WindML1_Zimmerman} (\ang{6.9}) and \SI{13}{\%} lower than Yu~\textit{et al.}~\cite{2024IROS_DOB_Yu} (\ang{3.8}).
Fig.~\ref{fig:outdoor_hover}(e) presents the distribution of wind direction estimation errors.
As illustrated in Fig.~\ref{fig:outdoor_hover}(f), the proposed method achieved $r=0.93$, outperforming Zimmerman~\textit{et al.}~\cite{2022Measurement_WindML1_Zimmerman} ($r=0.87$) and Yu~\textit{et al.}~\cite{2024IROS_DOB_Yu} ($r=0.91$).

In summary, the proposed approach demonstrates superior accuracy in wind vector estimation under static outdoor conditions, exhibiting both low estimation error and high consistency with ground truth.

\subsubsection{Indoor Dynamic Flight}

To assess wind estimation performance in still air, two indoor flight patterns were tested:  
(1) a horizontal figure-eight for low-speed, continuously changing headings, and  
(2) a fast lateral translation for high-speed motion along all principal axes.

\begin{figure}[t]
\centering
\includegraphics[width=0.9\columnwidth]{ 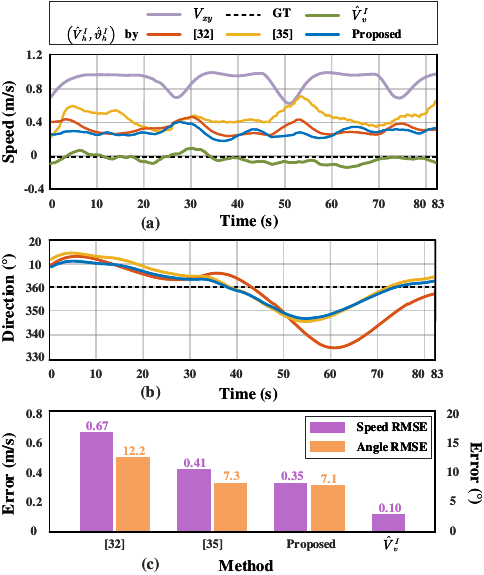}
\caption{
Indoor dynamic flight: horizontal figure-eight pattern. 
(a) Horizontal UAV speed $V_{xy}$, estimated wind speeds $\hat V_{h}^I$ (horizontal) and $\hat V_{v}^I$ (vertical).
(b) Estimated horizontal wind direction $\hat \vartheta_h^I$. 
(c) RMSE of wind speeds and horizontal wind direction (vertical wind speed RMSE shown only for the proposed method).
}
\label{fig:IndoorDynamic_HorizontalFigureEight}
\vspace{-0.0cm}
\end{figure}

\paragraph{Horizontal Figure-Eight Flight}
To evaluate performance during low-speed maneuvers with varying headings, the UAV followed a horizontal figure-eight trajectory.

Fig.~\ref{fig:IndoorDynamic_HorizontalFigureEight}(a) shows the UAV’s horizontal speed $V_{xy}$, estimated horizontal wind speeds $\hat V_{h}^I$ from Zimmerman~\textit{et al.}~\cite{2022Measurement_WindML1_Zimmerman}, Yu~\textit{et al.}~\cite{2024IROS_DOB_Yu}, and the proposed method, along with the proposed method’s vertical wind speed estimate $\hat V_{z}^I$. 
The proposed method produced horizontal wind estimates closest to the zero-wind ground truth, with reduced bias and fluctuations, and uniquely provided vertical wind estimation. 
Fig.~\ref{fig:IndoorDynamic_HorizontalFigureEight}(b) shows the estimated horizontal wind direction $\hat \vartheta_{h}^I$, where the proposed method exhibited the lowest deviation.
Quantitatively (Fig.~\ref{fig:IndoorDynamic_HorizontalFigureEight}(c)), the proposed method’s horizontal wind speed RMSE was \SI{0.35}{m/s}, \SI{48}{\%} lower than Zimmerman~\textit{et al.}~\cite{2022Measurement_WindML1_Zimmerman} (\SI{0.67}{m/s}) and \SI{15}{\%} lower than Yu~\textit{et al.}~\cite{2024IROS_DOB_Yu} (\SI{0.41}{m/s}).
Its vertical wind speed RMSE was \SI{0.1}{m/s}.  
For horizontal wind direction, the RMSE was \ang{7.1}, \SI{42}{\%} lower than Zimmerman~\textit{et al.}~\cite{2022Measurement_WindML1_Zimmerman} (\ang{12.2}) and \SI{3}{\%} lower than Yu~\textit{et al.}~\cite{2024IROS_DOB_Yu} (\ang{7.3}).

Errors were mainly due to thrust modeling inaccuracies during dynamic flight, particularly from variations in propeller advance ratio affecting aerodynamic thrust~\cite{2011AIAAASM_Propeller_Brandt}.

\paragraph{Fast Lateral Translation Flight}
To assess performance during rapid translations along $\bm{x}_I$, $\bm{y}_I$, and $\bm{z}_I$ axes, the UAV executed acceleration–deceleration maneuvers along each axis to cover both horizontal and vertical motions.

Fig.~\ref{fig:IndoorDynamic_FastLateralTranslatione}(a) shows the UAV’s translational speeds along each axis. 
Fig.~\ref{fig:IndoorDynamic_FastLateralTranslatione}(b) presents horizontal wind speed estimates $\hat V_{h}^I$ from all three methods and vertical wind speed $\hat V_{v}^I$ from the proposed method.  
Fig.~\ref{fig:IndoorDynamic_FastLateralTranslatione}(c) shows estimated wind direction $\hat\vartheta_{h}$, and Fig.~\ref{fig:IndoorDynamic_FastLateralTranslatione}(d) summarizes RMSEs.

For horizontal wind speed, the proposed method achieved an RMSE of \SI{0.38}{m/s}, \SI{34}{\%} lower than Zimmerman~\textit{et al.}~\cite{2022Measurement_WindML1_Zimmerman} (\SI{0.58}{m/s}) and \SI{7}{\%} lower than Yu~\textit{et al.}~\cite{2024IROS_DOB_Yu} (\SI{0.41}{m/s}). 
For wind direction, the RMSE was \ang{7.3}, \SI{40}{\%} lower than Zimmerman~\textit{et al.}~\cite{2022Measurement_WindML1_Zimmerman} (\ang{12.2}) and \SI{6}{\%} lower than Yu~\textit{et al.}~\cite{2024IROS_DOB_Yu} (\ang{7.8}). 
The proposed method’s vertical wind speed RMSE was \SI{0.17}{m/s}, with additional errors from rotor-induced downwash in the confined indoor space.

\paragraph{In summary}
Across both experiments, the proposed method consistently outperformed the other two, achieving the lowest RMSE in horizontal wind speed and direction, while maintaining low vertical wind error in dynamic indoor flight.

\subsubsection{Outdoor Dynamic Flight}
To evaluate wind estimation performance under realistic, time-varying conditions, outdoor dynamic flight experiments were conducted in the presence of the natural wind.

\paragraph{Circular Flight}
The UAV followed circular trajectories to induce varying airflow across all axes.

Fig.~\ref{fig:OutdoorDynamic_Circular} shows the estimated horizontal wind speed $\hat V_{h}^I$ and direction $\hat \vartheta_{h}^I$ along with ground truth from the ultrasonic anemometer. 
The proposed method achieved the highest accuracy, while baselines exhibited larger deviations. 

Corresponding RMSE and Pearson correlation coefficient ($r$) are summarized in Table~\ref{tab:OutdoorDynamic_Circular}.
For wind speed estimation, Zimmerman~\textit{et al.}~\cite{2022Measurement_WindML1_Zimmerman} reported an RMSE of \SI{0.55}{m/s} with $r=\SI{82}{\%}$.
Yu~\textit{et al.}~\cite{2024IROS_DOB_Yu} improved these results to \SI{0.49}{m/s} and \SI{85}{\%}.
The proposed method further reduced the error to \SI{0.29}{m/s} with $r = \SI{93}{\%}$, representing RMSE reductions of \SI{47}{\%} and \SI{41}{\%} compared to Zimmerman~\textit{et al.}~\cite{2022Measurement_WindML1_Zimmerman} and Yu~\textit{et al.}~\cite{2024IROS_DOB_Yu}, respectively.
For wind direction estimation, Zimmerman~\textit{et al.}~\cite{2022Measurement_WindML1_Zimmerman} reported an RMSE of \ang{14.7} with $r = \SI{76}{\%}$.
Yu~\textit{et al.}~\cite{2024IROS_DOB_Yu} improved the estimation with a lower RMSE of \ang{6.4} and a higher $r$ of \SI{83}{\%}.
The proposed method achieved the lowest RMSE (\ang{6.0}) with $r = \SI{85}{\%}$, corresponding to a further \SI{59}{\%} over Zimmerman~\textit{et al.}~\cite{2022Measurement_WindML1_Zimmerman} and \SI{6}{\%} reduction over Yu~\textit{et al.}~\cite{2024IROS_DOB_Yu}.
Overall, the proposed method consistently outperformed the baselines in horizontal wind estimation, achieving both the highest accuracy and the strongest correlation with ground truth.

\begin{table}[h]
\centering
\caption{
Outdoor dynamic flight: RMSE and Pearson correlation ($r$) of estimated wind speed and direction.
}
\label{tab:OutdoorDynamic_Circular}
\renewcommand{\arraystretch}{1.0}
\setlength{\tabcolsep}{2pt}
\begin{tabular}{
c
S[table-format=1.2]
S[table-format=2.0]
S[table-format=2.1]
S[table-format=2.0]
}
\toprule
  & \multicolumn{2}{c}{\textbf{Wind Speed} (\si{m/s})} 
  & \multicolumn{2}{c}{\textbf{Wind Direction} (\si{\degree})} \\
\cmidrule(lr){2-3} \cmidrule(lr){4-5}
\textbf{Method} 
& \multicolumn{1}{c}{RMSE $\downarrow$} & \multicolumn{1}{c}{$r$ (\%) $\uparrow$} & \multicolumn{1}{c}{RMSE $\downarrow$} & \multicolumn{1}{c}{$r$ (\%) $\uparrow$} \\
\midrule
Zimmerman~\textit{et al.}~\cite{2022Measurement_WindML1_Zimmerman}
  & 0.55 & 82 & 14.7 & 76 \\
Yu~\textit{et al.}~\cite{2024IROS_DOB_Yu}
  & 0.49 & 85 & 6.4 & 83 \\
Proposed
  & \bfseries{0.29} & \bfseries{93} & \bfseries{6.0} & \bfseries{85} \\
\bottomrule
\end{tabular}
\end{table}

\begin{figure}[t]
\vspace{0.0cm}
\centering
\includegraphics[width=1.0\columnwidth]{ 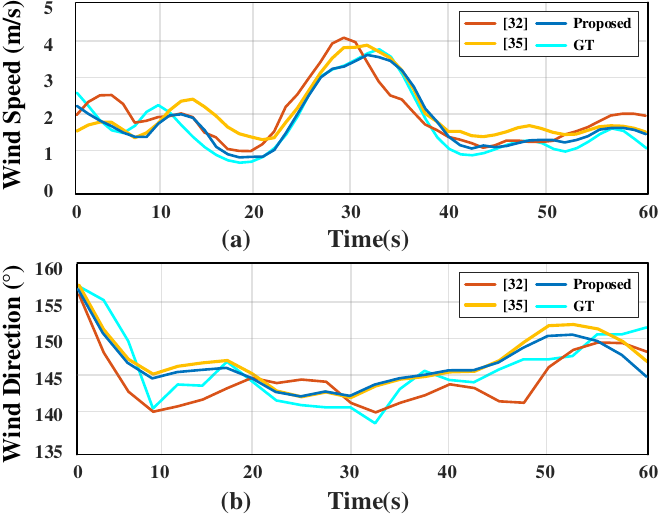}
\caption{
Outdoor dynamic flight: circular pattern. 
(a) Estimated horizontal wind speed $\hat V_{h}^I$ vs. ground truth. 
(b) Estimated horizontal wind direction $\hat \vartheta_{h}^I$ vs. ground truth.
}
\label{fig:OutdoorDynamic_Circular}
\vspace{-0.0cm}
\end{figure}

\paragraph{Extended Free Flights}

\begin{table*}[t]
\centering
\caption{
Outdoor dynamic flight: extended free flights.
RMSE of estimated wind speed $\varepsilon_v$ (\si{m/s}) and direction $\varepsilon_\theta$ (\si{\degree}) for multiple trials of hover and dynamic flights.
}
\label{tab:OutdoorDynamic_ExtendedFreeFlight}
\fontsize{9}{9}\selectfont 
\begin{tabularx}{\textwidth}{>{\raggedright\arraybackslash}m{3cm} | *{6}{>{\centering\arraybackslash}X}}
\specialrule{0.2em}{0em}{0.2em}
\textbf{Method} 
  & \multicolumn{2}{c}{Zimmerman~\textit{et al.}~\cite{2022Measurement_WindML1_Zimmerman}} & \multicolumn{2}{c}{Yu~\textit{et al.}~\cite{2024IROS_DOB_Yu}} & \multicolumn{2}{c}{Proposed} \\
\cmidrule(lr){2-3}\cmidrule(lr){4-5}\cmidrule(lr){6-7}
\textbf{Flights} 
  & $\varepsilon_v$ (\si{m/s}) & $\varepsilon_\theta$ (\si{\degree})
  & $\varepsilon_v$ (\si{m/s}) & $\varepsilon_\theta$ (\si{\degree}) 
  & $\varepsilon_v$ (\si{m/s}) & $\varepsilon_\theta$ (\si{\degree}) \\
\midrule
Trial 1 (Hover) & 0.58 & 8.7 & 0.35 & 7.0 & \textbf{0.28} & \textbf{6.9} \\
Trial 2 (Hover) & 0.40 & 6.9 & 0.27 & 5.5 & \textbf{0.23} & \textbf{5.4} \\
Trial 3 (Dynamic) & 0.70 & 11.8 & 0.50 & 10.1 & \textbf{0.46} & \textbf{8.9} \\
Trial 4 (Dynamic) & 0.54 & 13.1 & 0.44 & 11.4 & \textbf{0.42} & \textbf{10.5} \\
Trial 5 (Dynamic) & 0.70 & 13.5 & 0.48 & 7.9 & \textbf{0.37} & \textbf{7.1} \\
\bottomrule
\end{tabularx}
\end{table*}

To further assess performance in less constrained scenarios, extended outdoor free-flight tests were conducted in naturally varying wind, including both hover and dynamic maneuvers.
The UAV flew \SI{5}{\meter}–\SI{10}{\meter} from the ultrasonic anemometer following unconstrained trajectories. 
Five trials, each lasting up to 6~minutes, were recorded.
Trials~1–2 (Hover), conducted under the same setup as Sec.~\ref{subsubsec:Outdoor_Hover}, were used as reference under steady conditions for comparison with dynamic trials.
Trials~3–5 (Dynamic) involved free flights with heading and velocity changes.

Table~\ref{tab:OutdoorDynamic_ExtendedFreeFlight} reports the RMSE of estimated wind speed $\varepsilon_v$ and direction $\varepsilon_\theta$ for each trial and method.
Hover trials exhibited relatively low errors for all methods, with the proposed method achieving the smallest RMSE in both wind speed and direction estimation.
Dynamic trials showed higher errors.
These discrepancies may partly result from spatial variations in the wind field rather than from algorithmic degradation.
Because the ultrasonic anemometer assumes a uniform wind field, the moving UAV can traverse regions with wind conditions different from those measured by the anemometer, leading to mismatches between the estimated and ground-truth values.
Across all five trials, the proposed method consistently yielded the smallest RMSE, demonstrating that our method maintains high precision and robustness even in dynamic, unstructured outdoor flight scenarios.

\section{CONCLUSION AND FUTURE WORK}
\label{sec:conclusion}

In this work, we proposed a high-precision, wide-range wind estimation method relying solely on UAV onboard sensors. 
The system combines a DOB for external force estimation with a combination of TPS and polynomial regression models to map aerodynamic forces to wind vectors, eliminating the need for dedicated wind-sensing hardware. 
A custom-designed wind barrel enhances aerodynamic sensitivity, as confirmed by CFD and wind tunnel analyses.

The proposed method was extensively validated through wind tunnel, indoor, and outdoor flight experiments. 
In controlled wind tunnel tests up to \SI{10}{m/s}, it demonstrated RMSEs as low as \SI{0.06}{m/s} for speed and \ang{3.6} for direction under laminar airflow. 
Outdoor experiments confirmed accurate horizontal wind estimation, achieving RMSEs of \SI{0.29}{m/s} and \ang{6.0} for speed and direction, respectively, with strong correlation to ground-truth trends. 
Indoor dynamic trials further verified robust performance under varying maneuvers, providing reliable horizontal wind estimates and uniquely enabling vertical wind estimation with RMSEs below \SI{0.17}{m/s}, which is not provided by prior baselines.
These results demonstrate the feasibility, robustness, and broad applicability of the proposed method. 

Future work will explore (i) direct aerodynamic force-to-wind mapping to reduce dependence on wind tunnel calibration and enable fully calibration-free deployment, and (ii) validation against true vertical wind sources (e.g., 3D wind facilities) to further strengthen vertical wind estimation capability.
Furthermore, integrating the proposed wind estimator with performance-guaranteed safety controllers~\cite{yu2025review} or cooperative fault-tolerant architectures~\cite{yu2024fault} could significantly enhance UAV robustness in extreme wind events—e.g., gust rejection during formation flight or safe landing under sensor degradation.

\bibliography{reference}

@string{icra = {Proc. of the {IEEE} Intl. Conf. on Robot. and Autom.}}

@string{iros = {Proc. of the {IEEE/RSJ} Intl. Conf. on Intell. Robots and Syst.}}

@string{tor = {{IEEE} Trans. Robot.}}

@string{icuas = {Proc. of the Intl. Conf. on Unma. Air. Syst.}}

@inproceedings{2015JP_PitotTubeCalibration_Buscarini,
  author       = {Icaro de Oliveira Buscarini and Andre Costa Barsaglini and Paulo Jose Saiz Jabardo and Nilson Massami Taira and Gilder Nader},
  title        = {Impact of Pitot tube calibration on the uncertainty of water flow rate measurement},
  booktitle    = {Journal of Physics: Conference Series},
  volume       = {648},
  pages        = {012005},
  year         = {2015},
  doi          = {10.1088/1742-6596/648/1/012005}
}

@article{2000TIE_NDOB_Chen,
  author       = {Wen{-}Hua Chen and Donald J. Ballance and Peter J. Gawthrop and John O'Reilly},
  title        = {A nonlinear disturbance observer for robotic manipulators},
  journal      = {{IEEE} Trans. Ind. Electron.},
  volume       = {47},
  number       = {4},
  pages        = {932--938},
  year         = {2000},
  doi          = {10.1109/41.857974},
}

@inproceedings{2014AIM_NonlinearFourWrench_Yuksel,
  author       = {Burak Yuksel and Cristian Secchi and Heinrich H. B{\"{u}}lthoff and Antonio Franchi},
  title        = {A nonlinear force observer for quadrotors and application to physical interactive tasks},
  booktitle    = {Proc. {IEEE/ASME} Int. Conf. on Advanced Intelligent Mechatronics (AIM)}, 
  pages        = {433--440},
  year         = {2014},
  doi          = {10.1109/AIM.2014.6878116},
}

@inproceedings{2005ICRA_EarlyFixedWing_Kumon,
  author       = {Makoto Kumon and Ikuro Mizumoto and Zenta Iwai and Masanobu Nagata},
  title        = {Wind Estimation by Unmanned Air Vehicle with Delta Wing},
  booktitle    = {Proc. IEEE Int. Conf. on Robotics and Automation (ICRA)},
  pages        = {1896--1901},
  year         = {2005},
  doi          = {10.1109/ROBOT.2005.1570390},
}

@inproceedings{2015ICUAS_OnEstimation_Johansen,
  author       = {Tor A Johansen and Andrea Cristofaro and Kim S{\o}rensen and Jakob M Hansen and Thor I Fossen},
  title        = {On estimation of wind velocity, angle-of-attack and sideslip angle of small UAVs using standard sensors},
  booktitle    = {Proc. Int. Conf. Unmanned Aircraft Systems (ICUAS)},
  pages        = {510--519},
  year         = {2015},
  doi          = {10.1109/ICUAS.2015.7152330}, 
}

@article{2020TAES_MovingHorizon_Wenz,
  author       = {Andreas Wenz and Tor Arne Johansen},
  title        = {Moving Horizon Estimation of Air Data Parameters for UAVs},
  journal      = {{IEEE} Trans. Aerosp. Electron. Syst.},
  volume       = {56},
  number       = {3},
  pages        = {2101--2121},
  year         = {2020},
  doi          = {10.1109/TAES.2019.2946677},
}

@article{2018ARC_Review_Emran,
  author       = {Bara J. Emran and Homayoun Najjaran},
  title        = {A review of quadrotor: An underactuated mechanical system},
  journal      = {Annu. Rev. Control.},
  volume       = {46},
  pages        = {165--180},
  year         = {2018},
  doi          = {10.1016/J.ARCONTROL.2018.10.009},
}

@article{2023BLM_WindEstUsingAnemometer_Li,
  title={A Study on Measuring the Wind Field in the Air Using a multi-rotor UAV Mounted with an Anemometer},
  author={Li, Zhengnong and Pu, Ou and Pan, Yueyue and Huang, Bin and Zhao, Zhefei and Wu, Honghua},
  journal={Boundary-Layer Meteorology},
  volume={188},
  number={1},
  pages={1--27},
  year={2023},
  publisher={Springer}
}

@inproceedings{2020IROS_TouchTheWind_ukf_NNlstm_Tagliabue,
  author       = {Andrea Tagliabue and Aleix Paris and Suhan Kim and Regan Kubicek and Sarah Bergbreiter and Jonathan P. How},
  title        = {Touch the Wind: Simultaneous Airflow, Drag and Interaction Sensing on a Multirotor},
  booktitle    = {Proc. IEEE/RSJ Int. Conf. on Intelligent Robots and Systems (IROS)},
  pages        = {1645--1652},
  year         = {2020},
  doi          = {10.1109/IROS45743.2020.9341797},
}

@inproceedings{2019AIAASciTech_Model_UpDynamics_GonzalezRocha,
  author       = {Javier Gonzalez-Rocha and Craig A Woolsey and Cornel Sultan and Stephan F De Wekker},
  title        = {Model-based wind profiling in the lower atmosphere with multirotor UAS},
  booktitle    = {Proc. AIAA SciTech Forum},
  pages        = {1598},
  year         = {2019},
  doi          = {10.2514/6.2019-1598},
}

@inproceedings{2017AIAAAFM_Measuring_GonzalezRocha,
  author       = {Javier Gonzalez-Rocha and Craig A. Woolsey and Cornel Sultan and Stephan de Wekker and Nathan Rose},
  title        = {Measuring Atmospheric Winds from Quadrotor Motion},
  booktitle    = {Proc. AIAA Atmos. Flight Mech. Conf. (AFM)},
  year         = {2017},
  pages        = {1189},
  doi          = {10.2514/6.2017-1189},
}

@article{2018JWEIA_MHPP_Prudden,
  author       = {Sam Prudden and Alex Fisher and Matthew Marino and Abdulghani Mohamed and Simon Watkins and Graham Wild},
  title        = {Measuring wind with Small Unmanned Aircraft Systems},
  journal      = {J. Wind Eng. Ind. Aerodyn.},
  volume       = {176},
  pages        = {197--210},
  year         = {2018},
  doi          = {10.1016/j.jweia.2018.03.029},
}

@inproceedings{2018ICUAS_UltrasonicAnemometers_Hollenbeck,
  author       = {Derek Hollenbeck and Gregorio Nunez and Lance E Christensen and YangQuan Chen},
  title        = {Wind measurement and estimation with small unmanned aerial systems (suas) using on-board mini ultrasonic anemometers},
  booktitle    = {Proc. Int. Conf. Unmanned Aircraft Systems (ICUAS)},
  pages        = {285--292},
  year         = {2018},
  doi          = {10.1109/ICUAS.2018.8453418}, 
}

@article{2022Atmosphere_Wind_Tilt_Meier,
  author       = {Kilian Meier and Richard Hann and Jan Skaloud and Arthur Garreau},
  title        = {Wind estimation with multirotor UAVs},
  journal      = {Atmosphere},
  volume       = {13},
  number       = {4},
  pages        = {551},
  year         = {2022},
  doi          = {10.3390/atmos13040551},
}

@article{2015SAAP_RealTimeWindEstimation_TiltEarliest_Neumann,
  author       = {Patrick P. Neumann and Matthias Bartholmai},
  title        = {Real-time wind estimation on a micro unmanned aerial vehicle using its inertial measurement unit},
  journal      = {Sens. Actuators A Phys.},
  volume       = {235},
  pages        = {300-310},
  year         = {2015},
  doi          = {https://doi.org/10.1016/j.sna.2015.09.036},
}

@article{2017JAOT_WindEstimation_Tilt_Palomaki,
  author       = {Ross T Palomaki and Nathan T Rose and Michael van den Bossche and Thomas J Sherman and Stephan FJ De Wekker},
  title        = {Wind estimation in the lower atmosphere using multirotor aircraft},
  journal      = {J. Atmos. Oceanic Technol.},
  volume       = {34},
  number       = {5},
  pages        = {1183--1191},
  year         = {2017}, 
  doi          = {10.1175/JTECH-D-16-0177.1},
}

@inproceedings{2019CCTA_KalmanFilter_Xing,
  author       = {Zhewen Xing and Youmin Zhang and Chun{-}Yi Su and Yaohong Qu and Ziquan Yu},
  title        = {Kalman Filter-based Wind Estimation for Forest Fire Monitoring with a Quadrotor {UAV}},
  booktitle    = {Proc. IEEE Conf. on Control Technology and Applications (CCTA)},
  pages        = {783--788},
  year         = {2019},
  doi          = {10.1109/CCTA.2019.8920637},
}

@article{2020MSR_TwoStageKalman_Hajiyev,
  author       = {Chingiz Hajiyev and Demet Cilden-Guler and Ulviye Hacizade},
  title        = {Two-stage Kalman filter for fault tolerant estimation of wind speed and UAV flight parameters},
  journal      = {Meas. Sci. Rev.},
  volume       = {20},
  number       = {1},
  pages        = {35--42},
  year         = {2020},
  doi          = {10.2478/msr-2020-0005}, 
}

@article{2022Measurement_WindML1_Zimmerman,
  author       = {Steven Zimmerman and Miayan Yeremi and Ryozo Nagamune and Steven Rogak},
  title        = {Wind estimation by multirotor dynamic state measurement and machine learning models},
  journal      = {Measurement},
  volume       = {198},
  pages        = {111331},
  year         = {2022},
  doi          = {10.1016/j.measurement.2022.111331},
}

@inproceedings{2019AIAASciTech_Estimating_Allison,
  author       = {Sam Allison and He Bai and Balaji Jayaraman},
  title        = {Estimating wind velocity with a neural network using quadcopter trajectories},
  booktitle    = {Proc. AIAA SciTech Forum},
  pages        = {1596},
  year         = {2019},
  doi          = {10.2514/6.2019-1596}, 
}

@inproceedings{2023ICRA_FlowDrone_HotWireFlowSensors_Simon,
  author       = {Nathaniel Simon and Allen Z. Ren and Alexander Piqu{\'{e}} and David Snyder and Daphne Barretto and Marcus Hultmark and Anirudha Majumdar},
  title        = {FlowDrone: Wind Estimation and Gust Rejection on UAVs Using Fast-Response Hot-Wire Flow Sensors},
  booktitle    = {Proc. IEEE Int. Conf. on Robotics and Automation (ICRA)},
  pages        = {5393--5399},
  year         = {2023},
  doi          = {10.1109/ICRA48891.2023.10160454},
}

@article{2020JBSMSE_Hybrid_Marton,
  author       = {Apolo Silva Marton and Andr{\'e} Ricardo Fioravanti and Jos{\'e} Raul Azinheira and Ely Carneiro de Paiva},
  title        = {Hybrid model-based and data-driven wind velocity estimator for an autonomous robotic airship},
  journal      = {J. Braz. Soc. Mech. Sci. Eng.},
  volume       = {42},
  pages        = {1--11},
  year         = {2020},
  doi          = {10.1007/s40430-020-2215-8}, 
}

@inproceedings{2020IROS_Whisker_Kim,
  author       = {Suhan Kim and Regan Kubicek and Aleix Paris and Andrea Tagliabue and Jonathan P. How and Sarah Bergbreiter},
  title        = {A Whisker-inspired Fin Sensor for Multi-directional Airflow Sensing},
  booktitle    = {Proc. IEEE/RSJ Int. Conf. on Intelligent Robots and Systems (IROS)},
  pages        = {1330--1337},
  year         = {2020},
  doi          = {10.1109/IROS45743.2020.9341723},
}

@article{2020IJC_Using_Perozzi,
  author       = {Gabriele Perozzi and Denis V. Efimov and Jean{-}Marc Biannic and Laurent Planckaert},
  title        = {Using a quadrotor as wind sensor: time-varying parameter estimation algorithms},
  journal      = {Int. J. Control},
  volume       = {95},
  number       = {1},
  pages        = {126--137},
  year         = {2020},
  doi          = {10.1080/00207179.2020.1780324}, 
}

@article{2023Sensors_Microphones_Makaveev,
  author       = {Momchil Makaveev and Mirjam Snellen and Ewoud J. J. Smeur},
  title        = {Microphones as Airspeed Sensors for Unmanned Aerial Vehicles},
  journal      = {Sensors},
  volume       = {23},
  number       = {5},
  pages        = {2463},
  year         = {2023},
  doi          = {10.3390/S23052463},
}

@article{2017TRO_External_motormodel3_Tomic,
  author       = {Teodor Tomic and Christian Ott and Sami Haddadin},
  title        = {External Wrench Estimation, Collision Detection, and Reflex Reaction for Flying Robots},
  journal      = {{IEEE} Trans. Robotics},
  volume       = {33},
  number       = {6},
  pages        = {1467--1482},
  year         = {2017},
  doi          = {10.1109/TRO.2017.2750703},
}

@article{2020CEP_Multiple_Guo,
  author       = {Kexin Guo and Jindou Jia and Xiang Yu and Lei Guo and Lihua Xie},
  title        = {Multiple observers based anti-disturbance control for a quadrotor UAV against payload and wind disturbances},
  journal      = {Control Eng. Pract.},
  volume       = {102},
  pages        = {104560},
  year         = {2020},
  doi          = {10.1016/j.conengprac.2020.104560},
}

@article{2020TMECH_Generalized_DOBVtol_Zheng,
  author       = {Minghui Zheng and Ximin Lyu and Xiao Liang and Fu Zhang},
  title        = {A generalized design method for learning-based disturbance observer},
  journal      = {IEEE/ASME Trans. Mechatron.},
  volume       = {26},
  number       = {1},
  pages        = {45--54},
  year         = {2020},
  doi          = {10.1109/TMECH.2020.2999340}, 
}

@article{2018RAL_DOBHinfinity_Lyu,
  author       = {Ximin Lyu and Jinni Zhou and Haowei Gu and Zexiang Li and Shaojie Shen and Fu Zhang},
  title        = {Disturbance Observer Based Hovering Control of Quadrotor Tail-Sitter {VTOL} UAVs Using H\({}_{\mbox{{\(\infty\)}}}\) Synthesis},
  journal      = {{IEEE} Robotics Autom. Lett.},
  volume       = {3},
  number       = {4},
  pages        = {2910--2917},
  year         = {2018},
  doi          = {10.1109/LRA.2018.2847405},
}

@article{2014IMM_BriefOverview_Yeary,
  author       = {Mark B. Yeary and Boon Leng Cheong and James M. Kurdzo and Tian{-}You Yu and Robert D. Palmer},
  title        = {A brief overview of weather radar technologies and instrumentation},
  journal      = {{IEEE} Instrum. Meas. Mag.},
  volume       = {17},
  number       = {5},
  pages        = {10--15},
  year         = {2014},
  doi          = {10.1109/MIM.2014.6912194},
}

@article{2022SR_NeuralFly_OConnell,
  author       = {Michael O'Connell and Guanya Shi and Xichen Shi and Kamyar Azizzadenesheli and Anima Anandkumar and Yisong Yue and Soon{-}Jo Chung},
  title        = {Neural-Fly enables rapid learning for agile flight in strong winds},
  journal      = {Sci. Robotics},
  volume       = {7},
  number       = {66},
  year         = {2022},
  doi          = {10.1126/SCIROBOTICS.ABM6597},
}

@inproceedings{2011AIAAASM_Propeller_Brandt,
  author       = {John Brandt and Michael Selig},
  title        = {Propeller performance data at low reynolds numbers},
  booktitle    = {AIAA Aerosp. Sci. Meeting},
  pages        = {1255},
  year         = {2011}, 
  doi          = {10.2514/6.2011-1255}, 
}

@inproceedings{2014AIAAAAC_ReynoldsNumberEffects_Deters,
  author       = {Robert W. Deters and Gavin Kumar Ananda Krishnan and Michael S. Selig},
  title        = {Reynolds Number Effects on the Performance of Small-Scale Propellers},
  booktitle    = {Proc. {AIAA} Appl. Aerodyn. Conf.}, 
  year         = {2014},
  pages        = {2151},
  doi          = {10.2514/6.2014-2151},
}

@article{2011CFL_ReynoldsEffects_Wornom,
  title={Variational multiscale large-eddy simulations of the flow past a circular cylinder: Reynolds number effects},
  author={Wornom, Stephen and Ouvrard, Hilde and Salvetti, Maria Vittoria and Koobus, Bruno and Dervieux, Alain},
  journal={Computers \& Fluids},
  volume={47},
  number={1},
  pages={44--50},
  year={2011},
  publisher={Elsevier}
}

@inproceedings{2024IROS_DOB_Yu,
  author       = {Haowen Yu and Xianqi Liang and Ximin Lyu},
  title        = {DOB-based Wind Estimation of {A} {UAV} Using Its Onboard Sensor},
  booktitle    = {Proc. IEEE/RSJ Int. Conf. on Intelligent Robots and Systems (IROS)},
  pages        = {8126--8133},
  year         = {2024},
  doi          = {10.1109/IROS58592.2024.10801906},
}

@article{2003JRSS_TPS_Wood,
    author = {Wood, Simon N.},
    title = {Thin Plate Regression Splines},
    journal = {Journal of the Royal Statistical Society Series B: Statistical Methodology},
    volume = {65},
    number = {1},
    pages = {95-114},
    year = {2003},
    doi = {10.1111/1467-9868.00374}
}

@inproceedings{2024IROS_EnergyPlanning_Duan,
  title={Energy-optimized planning in non-uniform wind fields with fixed-wing aerial vehicles},
  author={Duan, Yufei and Achermann, Florian and Lim, Jaeyoung and Siegwart, Roland},
  booktitle={2024 IEEE/RSJ International Conference on Intelligent Robots and Systems (IROS)},
  pages={3116--3122},
  year={2024},
  organization={IEEE}
}

@article{2020AAcous_Ultrasonic_Jia,
title = {An accurate ultrasonic wind speed and direction measuring method by combining time-difference and phase-difference measurement using coded pulses combination},
journal = {Applied Acoustics},
volume = {159},
pages = {107093},
year = {2020},
issn = {0003-682X},
doi = {https://doi.org/10.1016/j.apacoust.2019.107093},
author = {Jia-jia Jiang and Wen-jie Dang and Fa-jie Duan and Xian-quan Wang and Xiao Fu and Chun-yue Li and Zhong-bo Sun and Han Liu and Ling-ran Bu}
}

@article { 2020AMS_WeatherBallon_Evert,
      author = "Evert I. F. de Bruijn and Fred C. Bosveld and Siebren de Haan and Bert G. Heusinkveld",
      title = "Measuring Low-Altitude Winds with a Hot-Air Balloon and Their Validation with Cabauw Tower Observations",
      journal = "Journal of Atmospheric and Oceanic Technology",
      year = "2020",
      publisher = "American Meteorological Society",
      address = "Boston MA, USA",
      volume = "37",
      number = "2",
      doi = "10.1175/JTECH-D-19-0043.1",
      pages=      "263 - 277"
}

@inproceedings{2021ICRA_Whisker_Andrea,
  title={Airflow-inertial odometry for resilient state estimation on multirotors},
  author={Tagliabue, Andrea and How, Jonathan P},
  booktitle={2021 IEEE International Conference on Robotics and Automation (ICRA)},
  pages={5736--5743},
  year={2021},
  organization={IEEE}
}

@article{2024Measurement_SurfaceSensor_Wang,
title = {Enhancing accuracy of surface wind sensors in wind tunnel Testing: A Physics-Guided neural network calibration approach},
journal = {Measurement},
volume = {234},
pages = {114812},
year = {2024},
issn = {0263-2241},
doi = {https://doi.org/10.1016/j.measurement.2024.114812},
author = {Zixiao Wang and Agathoklis Giaralis and Steven Daniels and Mingzhe He and Alessandro Margnelli and Chetan Jagadeesh},
}

@book{yu2024fault,
  title={Fault-Tolerant Cooperative Control of Unmanned Aerial Vehicles},
  author={Yu, Ziquan and Zhang, Youmin and Jiang, Bin and Su, Chun-Yi},
  year={2024},
  publisher={Springer}
}

@article{yu2025review,
  title={A review on safety control of unmanned aerial vehicles with guaranteed performance requirements},
  author={Yu, Ziquan and Li, Mengna and Zhang, Youmin and Jiang, Bin},
  journal={Progress in Aerospace Sciences},
  volume={158},
  pages={101144},
  year={2025},
  publisher={Elsevier}
}

@manual{soarability2025sniffer4d,
  author    = {{SoarAbility Technology}},
  title     = {{Sniffer4D Mini 2 Modular Meteorological Payload for UAV}},
  year      = {2025},
  note      = {Accessed: 2025-11-19},
  url       = {https://www.soarability.tech/uav_met}
}
\bibliographystyle{IEEEtran}

\end{document}